\journal{Journal of \LaTeX\ Templates}
\newcommand{\TAU}{\Large{\boldsymbol{\tau}}}
\DeclareMathOperator*{\mintheta}{min}
\DeclareMathOperator*{\argmin}{arg\,min}
\newcommand{\FPOP}{\textbf{FPOP}} 
\newcommand{\PELT}{\textbf{PELT}} 
\newcommand{\OP}{\textbf{OP}} 
\newcommand{\PDPA}{\textbf{PDPA}} 
\newcommand{\PhiFPOP}{\textbf{Ms.FPOP}} 
\newcommand{\PhiPELT}{\textbf{Ms.PELT}}
\newcommand{\Future}{\mathcal{A}} 
\newcommand{\Past}{\mathcal{B}} 
\newtheorem{lemma}{Lemma}
\newtheorem{assumption}{Assumption}
\date{}
\begin{document}
\begin{frontmatter}
\title{\PhiFPOP: An Exact and Fast Segmentation Algorithm
With a Multiscale Penalty}

\cortext[corresp_author]{Corresponding author}
\author[evryPS,ips2]{Liehrmann Arnaud\corref{corresp_author}}
\ead{arnaud.liehrmann@universite-paris-saclay.fr}
\author[evryPS,ips2]{Rigaill Guillem}
\address[evryPS]{Universit\'e Paris-Saclay, CNRS, Univ Evry, Laboratoire de Math\'ematiques et Mod\'elisation d'Evry, 91037, Evry-Courcouronnes, France.}
\address[ips2]{Universit\'e Paris-Saclay, CNRS, INRAE, Universit\'e  Evry, Institute of Plant Sciences Paris-Saclay (IPS2), 91190, Gif sur Yvette, France.
}

\begin{abstract}
Given a time series in $\mathbb{R}^n$ with a piecewise constant mean and independent noises, we propose an exact dynamic programming algorithm to minimize a least square criterion with a multiscale penalty promoting well-spread changepoints. Such a penalty has been proposed in Verzelen \textit{et al.}~(2020), and it achieves optimal rates for changepoint detection and changepoint localization.

Our proposed algorithm, named \PhiFPOP{},  extends functional pruning ideas of Rigaill~(2015) and Maidstone \textit{et al.}~(2017) to multiscale penalties. For large signals, $n \geq 10^5$, with relatively few real changepoints, \PhiFPOP{} is typically quasi-linear and an order of magnitude faster than \PELT{}. We propose an efficient C++ implementation interfaced with R of \PhiFPOP{} allowing to segment a profile of up to $n=10^6$ in a matter of seconds.

Finally, we illustrate on simple simulations that for large enough profiles ($n\geq 10^4$) \PhiFPOP{} using the multiscale penalty of Verzelen \textit{et al.}~(2020) is typically more powerfull than \FPOP{} using the classical BIC penalty of Yao 1989.
\end{abstract}

\begin{keyword}
changepoint detection, multiscale penalty, maximum likelihood inference, discrete optimization, dynamic programming, functional pruning 
\end{keyword}

\end{frontmatter}

\section{Introduction}

A National Research Council report \cite{national2013frontiers} identifies changepoint detection as one of the ``inferential giants'' in massive data analysis. 
Detecting changepoints, whether a posteriori or online, is important in areas as diverse as
 bioinformatics \cite{olshen2004circular,Picard2005}, econometrics and finance \cite{bai2003computation,Thies2018}, climate  \cite{Reeves2007}, autonomous
driving \cite{galceran2017multipolicy}, computer vision \cite{ranganathan2012pliss} and neuroscience \cite{jewell2020fast}.
The most common and prototypical changepoint detection problem is that of detecting changes in mean of a univariate gaussian signal :
\begin{equation}\label{eq:piecewisemodel}
y_t = f_t + \varepsilon_t, \quad \text{for } t=1, \ldots, n,
\end{equation}
where $f_t$ is a deterministic piecewise constant with changepoints  whose number $D$ and locations, $0 < \tau_1 < \ldots < \tau_D < n$, are unknown, and $\varepsilon_t$ are independant and follow a Gaussian distribution of mean 0 and variance 1. A large number of approaches have been proposed to solve this problem (amongst many others \cite{Yao,emilie,harchaoui2010multiple,frick,fryzlewicz2020detecting}, see \cite{aminikhanghahi2017survey,Truong2020} for a review).

Recently, \cite{verzelen2020optimal} characterize optimal rates for changepoint detection and changepoint localization and proposed
a least-squares estimator with a multiscale penalty achieving these optimal rates. 
This multiscale penalty depends on minus the log-length of the segments which promotes well spread changepoints. It can be written as :
\begin{equation}\label{eq:criteria_verzelenetal}
\sum_{d=1}^{D+1} \gamma + \beta \log(n) - \beta \log(\tau_d - \tau_{d-1}),
\end{equation}
where $\gamma=q L$ and $\beta=2L$ with $q$ positive and $L>1$, and with the convention that $\tau_0=0$ and $\tau_{D+1}=n$.

Up to a multiplicative constant this penalty is always smaller than the BIC penalty ($2\log(n)$) \cite{Yao}.
Intuitively, it favors balanced segmentation as:
\begin{itemize}
\item the penalty of a fixed sized segment ($r$) increases with $n$ : $\beta \log(n/r).$
\item while the penalty for a segment whose size is proportionnal to $n$ ($\alpha.n$) is constant of $n$ :  
$\beta \log(1/\alpha).$
\end{itemize}

\paragraph{Contribution}
In this paper, we propose a dynamic programming algorithm, named \PhiFPOP{} optimizing a slightly more general penalty. where the $\log(\tau_d - \tau_{d-1})$ is replaced by  $g(\tau_d - \tau_{d-1})$ for an arbitrary function $g$ satistying assumption A\ref{assumpt1}.

\paragraph{Existing works}
\PhiFPOP{} extends functional pruning techniques as in \PDPA{} or \FPOP{} \cite{PDPA,FPOP} to the case of multiscale penalties. A key condition for \FPOP{} and \PDPA{} is that the cost function is point additive (condition C1 in \cite{FPOP}). As we will explain in more details later, this condition is not verified for the multiscale penalty \eqref{eq:criteria_verzelenetal}, making the extension not trivial. The key idea behind functionnal pruning is to store the set of parameter values for which a particular change is optimal. For a classical penalty (i.e. with a point additive cost function) this set gets smaller with every new datapoint. This is not the case with the multiscale penalty making the update more complex. A key insight of \PhiFPOP{} is to store a slightly larger set that is easy to update.

Importantly, it is possible to optimize the multiscale criteria of \cite{verzelen2020optimal} using inequality based pruning as in \PELT{}. We will call \PhiPELT{} this strategy. However for large signals with relatively few true changepoints  it is our experience that \PhiPELT{} is quadratic while \PhiFPOP{} is quasi-linear. For example it can be seen on Figure \ref{fig:simple_runtime}.A that it takes about 193 seconds for \PhiPELT{} to process a signal of size $n=128000$ without any changepoint. In the same amount of time \PhiFPOP{} can process signals of size larger than $n=4\times10^6.$ In the presence of true changepoints, (one every thousand datapoints) \PhiPELT{} as expected is much faster but still slower than \PhiFPOP{} (see Figure \ref{fig:simple_runtime}.B).

\begin{figure}[H]
\centering
\includegraphics[scale=0.34]{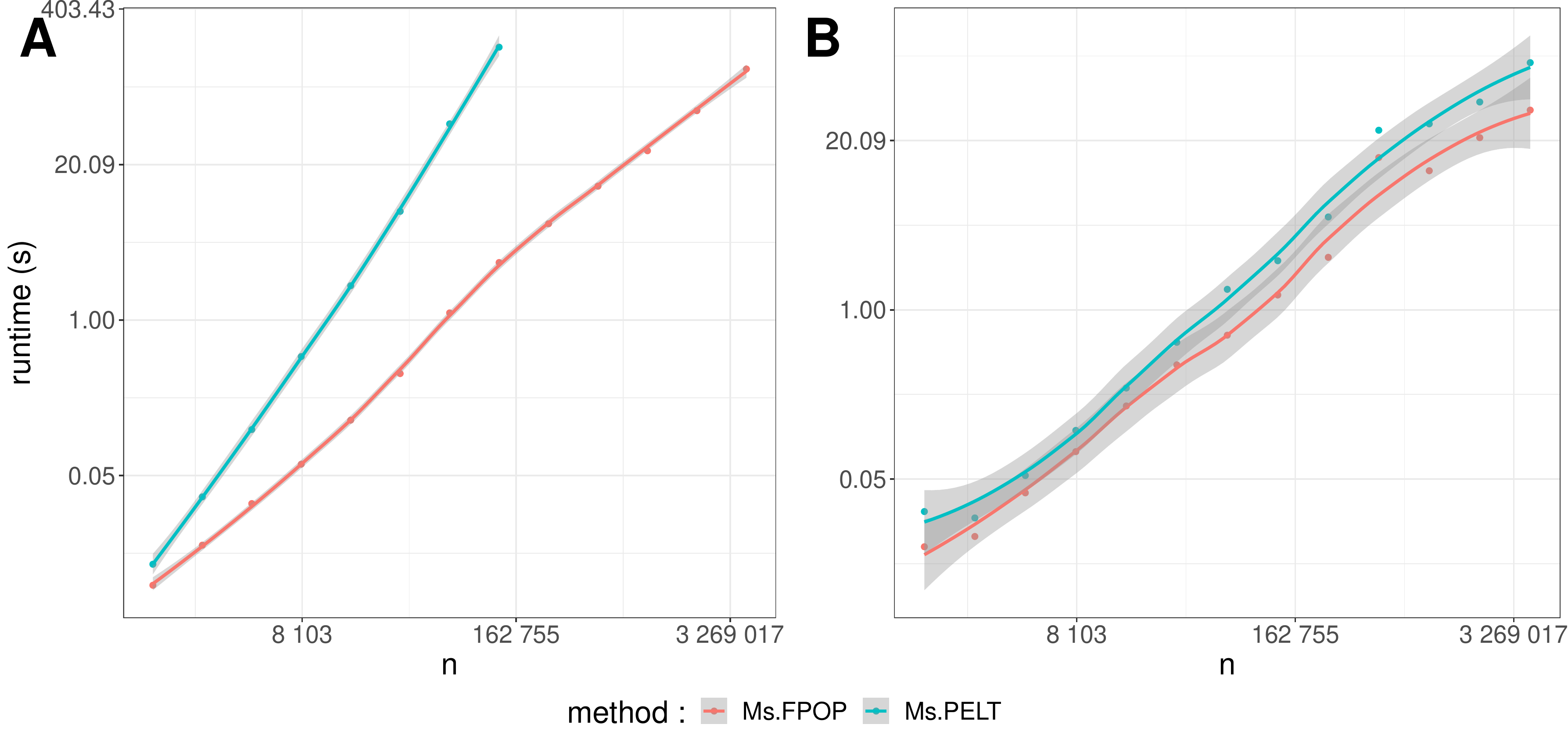}
\vspace{-0.7cm}
\caption{\textbf{Runtimes of \PELT{} and \PhiFPOP{} as a function of $n$ to optimize the multiscale penalty of \cite{verzelen2020optimal} with $\beta=2.25$ and $q=9$ on an Intel Core i7-10810U CPU @ 1.10GHzx12 computer for signals without changes (A) or signal with a change of size 1 every thousand datapoints (B).}}
\label{fig:simple_runtime}
\end{figure}

\paragraph{Outline}
In the rest of the paper we will (1) introduce our notations, (2) review the key idea behind \FPOP, (3) explain how and under which conditions we extend \FPOP\ to multiscale penalty, (4) study the performance of \PhiFPOP\ relative to \FPOP\ for various signals and (5) conclude with a discussion.

\subsection{Multiple Changepoint Model}

In this section we describe our changepoint notations and the multiscale criteria we want to optimize.

\paragraph{Segmentations and set of segmentations} For any $n$ in $\mathbb{N}$ we write $1:n=\{1, \cdots, n\}$. For any integer $D \geq 0$ we define a segmentation with $D$ changes of $1:n$ as an ordered subset of $1:(n-1)$ of size $D, $ with $\tau_ {j}$ the location of the $j^{th}$ change for $j$ in  $1, \ldots, D$. 
It will be usefull to also consider the dummy indices $\tau_{0} = 0$ and $\tau_{D + 1} = n$. We call $\mathcal{M}^{D}_{1: n}$ the set of all such segmentations in $D$ changes and $\mathcal{M}_{1: n}$ the union of all these sets : $\bigcup_{0 \leq D \leq n-1} \mathcal{M}^{D}_{1: n}.$ For any segmentation $\tau$ in $\mathcal{M}_{1:n}$ we note $|\tau|$ the number of segments of $\tau$. In other words, if $\tau$ is in $\mathcal{M}^{D}_{1:n}$ then $|\tau| = D+1.$ We can enumerate the elements of $\mathcal{M}_{1:n}$ and we get :
\begin{displaymath}
| \mathcal{M}_{1:n}| = \sum_{D = 0}^{n-1} | \mathcal{M}^{D}_{1:n} | = \sum_{D = 0}^{n-1} \binom{n-1}{D} = 2^{n-1}
\end{displaymath} 

\paragraph{Multiscale penalized likelihood} Under the piecewise constant model \ref{eq:piecewisemodel} a classical method to estimate the position and the number of changes is to optimize a penalized likelihood criterion. 
It is common to use a penalty that is linear in the number of changepoints  \cite{Yao,PELT,FPOP} and optimization wise the goal is to compute:

\begin{align*}
    {\TAU}^{*}_n & = &\argmin_{\tau \in  \mathcal{M}_{1:n}}\ \left\{ \displaystyle\sum_{j=1}^{|\tau|}  \min_{\mu} \left( \displaystyle\sum_{i=\tau_{j-1}+1}^{\tau{j}} (y_{i} -\mu)^2 \right) + \alpha|\tau| \right\}, \\
    F_n & = & \min_{\tau \in  \mathcal{M}_{1:n}}\ \left\{ \displaystyle\sum_{j=1}^{|\tau|}  \min_{\mu} \left( \displaystyle\sum_{i=\tau_{j-1}+1}^{\tau{j}} (y_{i} -\mu)^2 \right) + \alpha|\tau| \right\},
    \end{align*}
\vspace{-1cm}\begin{equation}\label{mainl2pen}
\end{equation}
where $\alpha$ is a constant to be calibrated (\textit{e.g.} $\alpha = 2 \log(n)$). 

Here we consider a more general penalty that depends on the length of the segments: 
\begin{align*}
    {\TAU}^{*}_n & = &\argmin_{\tau\in  \mathcal{M}_{1:n}}\ \left\{ \displaystyle\sum_{j=1}^{|\tau|}  \min_{\mu} \left( \displaystyle\sum_{i=\tau_{j-1}+1}^{\tau{j}} (y_{i} -\mu)^2 -\beta g(\tau_{j}-\tau_{j-1})\right) + \alpha |\tau|\right\}, \\
   F_n & = & \min_{\tau\in  \mathcal{M}_{1:n}}\ \left\{ \displaystyle\sum_{j=1}^{|\tau|}  \min_{\mu} \left( \displaystyle\sum_{i=\tau_{j-1}+1}^{\tau{j}} (y_{i} -\mu)^2 -\beta g(\tau_{j}-\tau_{j-1})\right) + \alpha |\tau|\right\},
\end{align*}
\vspace{-1cm}\begin{equation}\label{mainfpoppsd}
\end{equation}
where $g$ is a function satistfying assumption A\ref{assumpt1} described in the next paragraph, and $\alpha$ and $\beta$ are constants to be calibrated. 
We recover the multiscale criteria proposed in \cite{verzelen2020optimal} taking $g=\log$, $\alpha = \gamma + \beta g(n)$, and $\gamma $ a constant that remains to be chosen.
We recover the classical penalty of \cite{Yao} taking $g=0$, $\alpha =  2\log(n).$

\begin{assumption}\label{assumpt1} 
 $h(t, s, s') = g (t-s') - g(t-s)$ is a non-decreasing function in $t$, and $ \lim_{t \to \infty} h(t, s, s') = 0 $ therefore $h(t, s, s') \leq 0$.
\end{assumption}
This assumption will be useful later to bound the difference between the cost of two changes $s$ and $s'$. Intuitively, assumption A\ref{assumpt1} states that $g$ favors older changes but that asymptotically (large enough $t$ relative to $s$ and $s'$) this advantage for older changes vanishes. Importantly, this assumption is true for the multiscale penalty proposed in \cite{verzelen2020optimal} as $\beta > 0$ and $g (t-s') - g(t-s)= \log(1 - (s'-s)/(t-s))$ is increasing with $t$.

\subsection{Optimization with Dynamic Programming}

In this section we explain how one can optimize equation \eqref{mainfpoppsd} using dynamic programming ideas with (i) inequality based pruning and (ii) functional pruning.

\paragraph{Dynamic programming with inequality based pruning} The penalised cost of a segmentation $\tau$ inside the $\arg\min$ of equation \eqref{mainfpoppsd} can be written as a sum over all segments of $\tau$ :
\begin{displaymath}
\sum_{j=1}^{|\tau|}  \min_{\mu} \left( \displaystyle\sum_{i=\tau_{j-1}+1}^{\tau{j}} (y_{i} -\mu)^2 -\beta g(\tau_{j}-\tau_{j-1}) + \alpha \right),
\end{displaymath}
therefore the optimisation can be done iteratively using the Optimal Partionning (\OP) algorithm proposed in \cite{OP} using dynamical programming ideas developped in \cite{law} and \cite {bell}. It is possible to speed calculations using the \PELT\  algorithm \cite{PELT} because equation (4) of \cite{PELT} is true at least for constant $K = -\beta (\max_{1 \leq \ell \leq n}\{g(\ell)\} - 2 \min_{1 \leq \ell \leq n} \{ g(\ell) \})$ (see \ref{app:phi_pelt}). 
If $g$ is concave (as in the penalty \eqref{eq:criteria_verzelenetal} proposed in \cite{verzelen2020optimal}), $K$ can be chosen much closer to zero : $K = - \beta (g(2) - 2g(1))$ (see \ref{app:phi_pelt}), or adaptively to the last segment length $\ell$ : $\ K_{\ell} = - \beta (g(\ell)+g(1)-g(\ell+1))$ (see \ref{phi_pelt_2}). Our implementation of \PELT{}  optimizing (\ref{mainfpoppsd}) with $g=\log$ and $K_{\ell}=-\beta\log(\frac{1}{\ell}+1)$ is called \PhiPELT. Note that $K_{\ell}\leq -\beta\log(2)$.

As shown in the Figure \ref{fig:simple_runtime}, if the number of real changepoints is not linear in $n$, for $g = \log$, and a positive $\beta$, \PhiPELT\ is quadratic. This makes the analysis of large profiles with  $10^5$ or $10^6$ datapoints long and unpractical (\textit{e.g.} $>100$ seconds for a profile with $10^5$ datapoints and 1 changepoint, $>1$ hour for a profile with $10^6$ datapoints and 1 changepoint \footnote{ Runtimes observed on an Intel Core i7-10810U CPU @ 1.10GHzx12 computer.}).

\paragraph{Dynamic programming with functional pruning} In the rest of the paper, we present a functional pruning algorithm (called \PhiFPOP), in the sense of the \PDPA\ \cite{PDPA} or \FPOP\ \cite{FPOP}, to solve (\ref{mainfpoppsd}), making it possible to optimize (\ref{mainfpoppsd}) in a matter of seconds even for $n=10^6$ As the cost of equation (\ref{mainfpoppsd}) is not point-additive, condition C1 of \cite{FPOP} is not true, and maintaining the set of means for which a change is optimal is more complex. Our key idea is to maintain a slightly larger set that is easier to update.

\section{Functional Pruning}

\subsection{Functional Pruning Optimal Partioning (\FPOP{})}
To better explain \PhiFPOP{} we first review some of the key elements of \FPOP{} to optimize equation \eqref{mainl2pen}.
\FPOP{} introduces for every change $s$ its best cost as function of the last parameter $\mu$ at time $t$, $\widetilde{f}_{t, s} (\mu)$. 
Formally this is:
\begin{equation}\label{fpop_f}
    \widetilde{f}_{t,s}(\mu) = F_{s} + \displaystyle\sum_{i=s+1}^{t} (y_{i}-\mu)^{2} + \alpha, \text{  }\text{  } \text{ with } \text{  }\text{  } \widetilde{f}_{t,t}(\mu) = F_{t}+\alpha \text{  }\text{  }\text{ and } \text{  }\text{  } F_{0}=-\alpha.
\end{equation}
$\widetilde{f}_{t,s}(\mu)$ is a second degree polynomial in $\mu$ defined by three coefficients : $a_2\mu^2 + a_1\mu + a_0$ with 
$a_2  = t-s,$ $a_1  =-2\sum_{i=s+1}^{t}y_i$ and
$a_0  = F_{s}+\alpha+\displaystyle\sum_{i=s+1}^{t} y_{i}^2$.
The update of these coefficients is straightforward using the following formula: 
\begin{equation}
    \widetilde{f}_{t,s}(\mu) = \widetilde{f}_{t-1,s}(\mu) + (y_t-\mu)^2.
\end{equation}
    
At each time step $t$, \FPOP{} updates the minimum of all $\widetilde{f}_{t, s} (\mu) $, denoted $\widetilde{F}_{t}(\mu) =  \min_{s\leq t}\ \left\{ \widetilde{f}_{t,s}(\mu) \right\}.$ The key idea behind \FPOP{} is that to compute and update $\widetilde{F}_{t}(\mu)$ one only need to consider changes $s$ with a none empty ``living-set'' : $\mathcal{F}_{t}= \{ s \leq t | Z_{t, s}^{*} \neq \emptyset\}$ where the ``living-set'' of change $s$ is $Z_{t, s}^{*} = \{ \mu |  \widetilde{f}_{t, s}(\mu) = \widetilde{F}_{t}(\mu)\}$. Given those definitions we have $\widetilde{F}_{t}(\mu) =  \min_{s \in \mathcal{F}_{t}}\ \left\{ \widetilde{f}_{t,s}(\mu) \right\}$. In other words, $s$ is pruned as soon as its ``living-set'' is empty, which is justified because
\begin{equation}
    \label{14}
    Z_{t,s}^{*} \supset Z_{t+1,s}^{*}. \text{  }\text{  } \text{  }\text{and } \text{  }\text{  } \text{  } Z_{t,s}^{*} = \emptyset \implies Z_{t+1,s}^{*}=\emptyset\ .
\end{equation} 
Note that we can then retrieve $F_t$ by minimizing $\widetilde{F}_{t}(\mu)$ on $\mu$.

\subsection{\PhiFPOP\ : functional Pruning for a Multiscale Penalty}\label{new}

\PhiFPOP{} optimizes equation \eqref{mainfpoppsd}. 
As for \FPOP{} we introduce for every change $s$ its best cost as a function of the last parameter $\mu$ at time $t$, $\widetilde{f}_{t, s} (\mu)$. Formally this is :
\begin{equation}\label{new_f_cost}
    \widetilde{f}_{t,s}(\mu) = F_{s} + \displaystyle\sum_{i=s+1}^{t} (y_{i}-\mu)^{2} + \alpha - \beta g\left(t-s\right),
\end{equation}
with $\widetilde{f}_{t,t}(\mu) = F_{t}+\alpha$ and $F_{0}=-\alpha$. As in \FPOP{}, $\widetilde{f}_{t,s}(\mu)$ can be stored as a second degree polynomial in $\mu$. The update is also straightforward using the following formula:

\begin{equation}\label{mise_a_jour_cost_f_fpoppsd}
    \widetilde{f}_{t,s}(\mu) = \widetilde{f}_{t-1,s}(\mu) + (y_{t}-\mu)^{2} + \beta g\left(t-1-s\right) - \beta g\left(t-s\right)
\end{equation}

Analogously to \FPOP{} we can calculate $F_t$ by minimizing $\widetilde{f}_{t, s}$ both on $\mu$ and $s$. The main difference with \FPOP{} is that the rule (\ref{14}) is no longer true for \PhiFPOP{} because $\widetilde{f}_{t, s}(\mu) - \widetilde{f}_{t, s'}(\mu)$ depends on $t$:
\begin{equation} \label{etude_de_fct}
    \widetilde{f}_{t,s}(\mu) - \widetilde{f}_{t,s^{'}}(\mu) =  F_{s}-F_{s^{'}} + \displaystyle\sum_{i=s+1}^{s{'}} (y_{i}-\mu)^{2}
    +\beta(\underbrace{g(t-s^{'}) - g(t-s))}_{\substack{\text{\textit{a function varying}}\\ \text{\textit{with t, s et s'}}}}).
\end{equation}
Because of that, in the course of the algorithm we need to re-evaluate the set on which the candidate change $s$ is better than $s'$ at various $t$, $I_{t,s,s^{'}}$ with $s < s^{'}$: 
\begin{equation}
    I_{t,s,s^{'}} = \{ \mu\ |\ \widetilde{f}_{t,s}(\mu)\ \leq\ \widetilde{f}_{t,s^{'}}(\mu)\}.
\end{equation}

For arbitrary functions $g$ the set $I_{t,s,s^{'}}$ may vary drastically from one $t$ to the next. Using assumption A\ref{assumpt1} we can control those variations. 

\subsubsection{Update of The Candidate Changes Living Set ($ Z_{t, s}$)}
Rather than evaluating the exact living set $Z_{t, s}^{*}$ of all changes, we are seeking to update a slightly larger set, $Z_{t, s}$, including $Z_{t, s}^{*}$ and such that if $Z_{t, s}$ is empty we can guarantee that $Z_{t+h,s}^{*}$ is also empty for all $h>0$. The possibility of defining such a $Z_{t,s}$ depends on the property of the function $g$.

Assume A\ref{assumpt1} we propose to update $Z_{t+1,s}$ as follow:
\begin{equation} \label{update_croissant}
    Z_{t+1,s} = Z_{t,s}\ \cap \overbrace{(\ \bigcap_{s' \in \Future_{t,s}} I_{t+1,s,s{'}})}^{\text{\textit{comparison with future changes}}} \backslash \overbrace{(\ \bigcup_{s'' \in \Past_{s}} I_{\infty,s{''},s})}^{\text{\textit{comparison with past changes}}}\ ,
\end{equation}
where $\Future_{t, s}$ is any subset of $\{s+1, ..., t\}$,
$\Past_{s}$ is any subset of $\{1, ..., s-1\}$, and
$I_{\infty,s,s^{'}}$ correspond to $I_{t,s,s^{'}}$ when $t \to \infty$ (which is properly define under assumption A1). 

\paragraph{Pruning} Based on update \eqref{update_croissant} it should be clear that if $Z_{t,s}$ is empty so are all $Z_{t+h,s}, $ for $h>0.$  In the next lemma we show that $Z_{t,s}$ includes $Z^*_{t,s}.$ Therefore we further have that if $Z_{t,s}$ is empty so are all $Z^*_{t+h,s}, $ and change $s$ can be pruned. 

\begin{lemma}
Taking $Z_{s,s} = ]min_i y_i, \max_i y_i[$, updating $Z_{t+1,s}$ using equation \eqref{update_croissant} and assuming A\ref{assumpt1} we have
\begin{equation}\label{eq:lemmamain}
     Z_{t,s}^{*} \subset Z_{t,s}\ ,
\end{equation}
and for an integer $h > 0$
\begin{equation}\label{eq:lemmacorro}
Z_{t+h,s}^{*} \subset Z_{t+1,s}\ .
\end{equation}
\end{lemma}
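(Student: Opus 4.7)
The argument rests on a single monotonicity consequence of Assumption~A\ref{assumpt1}: in \eqref{etude_de_fct} the only $t$-dependent term of $\widetilde{f}_{t,s}(\mu)-\widetilde{f}_{t,s'}(\mu)$ is $\beta h(t,s,s')$, so for any pair $s<s'$ this difference is non-decreasing in $t$. Equivalently, $I_{t,s,s'}$ is non-increasing in $t$, and $I_{\infty,s,s'}=\bigcap_{t\geq s'} I_{t,s,s'}$ is well defined and reached in the limit.

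To establish \eqref{eq:lemmamain} I will proceed directly, without induction, by unrolling the recurrence \eqref{update_croissant} into
\[
Z_{t,s}\ =\ Z_{s,s}\ \cap\ \bigcap_{u=s+1}^{t}\bigcap_{s'\in\Future_{u-1,s}} I_{u,s,s'}\ \setminus\ \bigcup_{s''\in\Past_s} I_{\infty,s'',s},
\]
which is valid because $\Past_s$ does not depend on time. Fix $\mu\in Z_{t,s}^*$, i.e.\ $\widetilde{f}_{t,s}(\mu)\leq\widetilde{f}_{t,s'''}(\mu)$ for every $s'''\leq t$, and verify the three membership conditions one by one. Membership in $Z_{s,s}={]\min_i y_i,\max_i y_i[}$ I will handle by the standard observation that any outer minimizer of $\widetilde{F}_n$ is an average of $y_i$'s and therefore lies in this interval, so restricting attention to this range is harmless. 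For each $u\leq t$ and $s'\in\Future_{u-1,s}$ (so $s<s'<u\leq t$), the monotonicity fact yields $\widetilde{f}_{u,s}(\mu)-\widetilde{f}_{u,s'}(\mu)\leq \widetilde{f}_{t,s}(\mu)-\widetilde{f}_{t,s'}(\mu)\leq 0$, hence $\mu\in I_{u,s,s'}$. For each $s''\in\Past_s$ (so $s''<s$), applying the same monotonicity to the pair $(s'',s)$ between $u=t$ and $u=\infty$ gives $\widetilde{f}_{\infty,s''}(\mu)-\widetilde{f}_{\infty,s}(\mu)\geq \widetilde{f}_{t,s''}(\mu)-\widetilde{f}_{t,s}(\mu)\geq 0$, so $\mu\notin I_{\infty,s'',s}$. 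Inclusion \eqref{eq:lemmacorro} then drops out as a short corollary: apply \eqref{eq:lemmamain} at time $t+h$ to get $Z_{t+h,s}^*\subset Z_{t+h,s}$, and observe that \eqref{update_croissant} is a purely shrinking update, $Z_{u+1,s}\subset Z_{u,s}$, so iterating gives $Z_{t+h,s}\subset Z_{t+1,s}$.

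The main obstacle I anticipate is the base-case inclusion $Z_{s,s}^*\subset Z_{s,s}$. Taken literally this fails: when $|\mu|$ is very large the constant function $\widetilde{f}_{s,s}(\mu)=F_s+\alpha$ dominates every parabola $\widetilde{f}_{s,s'}(\mu)$ with $s'<s$, so $Z_{s,s}^*$ leaks outside ${]\min_i y_i,\max_i y_i[}$. My plan is to absorb this by tacitly restricting the living set to $\mu\in{]\min_i y_i,\max_i y_i[}$, which is harmless for the final segmentation problem. A secondary, purely cosmetic issue is the possible equality $\widetilde{f}_{\infty,s''}(\mu)=\widetilde{f}_{\infty,s}(\mu)$ in the past-change step, corresponding to a measure-zero set of ties between equally optimal segmentations; it can be disposed of by a convention on $\leq$ versus $<$ in the definition of $I_{\infty,s'',s}$.
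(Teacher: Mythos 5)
Your proof is correct and is essentially the paper's own argument: the paper establishes \eqref{eq:lemmamain} by an induction on $t'\le t$ that uses exactly your two monotonicity consequences of A\ref{assumpt1} (namely $I_{t,s,s'}\subset I_{u,s,s'}$ for $u\le t$ and $I_{\infty,s'',s}\subset I_{t,s'',s}$), your unrolled intersection formula being just the closed form of that induction, and it derives \eqref{eq:lemmacorro} precisely as you do, from \eqref{eq:lemmamain} plus the fact that update \eqref{update_croissant} only shrinks $Z_{t,s}$. The two caveats you flag — restricting $\mu$ to $]\min_i y_i,\max_i y_i[$ so that the base case holds, and the $\leq$ versus $<$ convention for ties with past changes — are the same conventions the paper adopts implicitly when it asserts $Z^*_{t,s}\subset Z_{s,s}$ ``by definition'' and writes $Z^*_{t,s}$ in terms of the sets $I_{t,\cdot,\cdot}$, so flagging them explicitly is a point in your favor rather than a gap.
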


\begin{proof}
For any $t$, we will prove by induction that for any $t'$ in $\{s, \cdots, t\}$ we have $Z_{t,s}^{*} \subset Z_{t',s}.$

For $t'=s$ and for any $t$ larger or equal to $s$ we have (by definition of $Z_{s, s}$) that $Z_{t,s}^{*} \subset Z_{t',s} = Z_{s, s}.$

Now assume that for $t' < t$ we have $Z_{t,s}^{*} \subset Z_{t',s}.$
As $h$ is non-decreasing for any $t'+1 \leq t$ we have the following two inclusions :
\begin{align}
I_{t, s, s^{'}}  & \subset  I_{t'+1, s, s^{'}}. \label{inclus2} \\
I_{\infty, s, s^{'}}  & \subset  I_{t'+1, s, s^{'}} \label{inclus1}  
\end{align} 

Therefore for $t'< t$ we have
\begin{eqnarray*} \label{h_croissant}
    Z^{*}_{t,s} & = & \qquad \quad (\ \bigcap_{s < s^{'} \leq t} I_{t,s,s^{'}})   \backslash (\ \bigcup_{s^{''} < s} I_{t,s{''},s}) \qquad \quad \quad \text{ by definition of } Z^*_{t,s} \\
    Z^{*}_{t,s} & \subset & Z_{t',s}\ \cap (\ \bigcap_{s < s^{'} \leq t} I_{t,s,s^{'}})   \backslash (\ \bigcup_{s^{''} < s} I_{t,s{''},s}) \qquad \quad \quad \text{by induction} \\
    & \subset &  Z_{t',s}\ \cap (\ \bigcap_{s \ < s^{'} \leq \ t} I_{t'+1,s,s^{'}}) \backslash (\ \bigcup_{s^{''} < \ s} I_{\infty,s^{''},s}) \quad \text{using equation \eqref{inclus2} and\eqref{inclus1}} \\
    & \subset &  Z_{t',s}\ \cap (\ \bigcap_{s^{'} \in \ \Future_{t', s}} I_{t^{'}+1,s,s^{'}}) \backslash (\ \bigcup_{s^{''} \in \ \Past_{s}} I_{\infty,s^{''},s}) \quad \text{by definition of } \Future_{t', s} \text{ and } \Past_{s}.\\
\end{eqnarray*}
Using equation \eqref{update_croissant} we thus get that $Z^{*}_{t,s} \subset Z_{t'+1, s}, $ proving the induction.

To recover equation \eqref{eq:lemmacorro} we notice from update \eqref{update_croissant} that $Z_{t+1, s} \subset Z_{t, s}$ and apply equation \eqref{eq:lemmamain}.
\end{proof}

\subsubsection{\PhiFPOP\ Algorithm, Choice of $\Future_{t, s}$ and $\Past_{s}$}\label{sec:subsampling}

The update rule (\ref{update_croissant}) suggest that for each candidate change $s$ we should compare it future change $s'$ in $\Future_{t,s}$, and past change $s''$ in $\Past_{s}$. For past candidate changes $s^{''}$ this comparison can be done once and for all considering that $t$ goes to infinity ($I_{\infty,s^{''},s}$). For future candidate changes $s^{'}$, on the contrary, it might be usefull to update the interval $I_{t,s,s^{'}}$.
Performing at each time step, for each $s$, a comparison with all s' is time consuming. Intuitively, the complexity of each time step is in $\mathcal{O}(\text{number of candidate changes}^2).$ Ideally, for each $s$, one would like to make the minimum number of comparisons that would result in its pruning. In the  Algorithm \ref{algo1} we consider a generic \textit{sampling} function of $s'$ that returns $\Future_{t,s}$ (see the Sampling Strategies paragraph in section \ref{sec:Rcpp}). \\[0.1cm]

{\footnotesize
\begin{algorithm}[H]
\SetAlgoLined
\KwIn{$Y=(y_{1}, ..., y_{n})$, $\ \alpha$,$\ \beta$, $\ g=log(.)$}
\KwOut{set of last best changes $cp_{n}$}
$n \gets |Y|$\;
$F_{0} \gets -\alpha$\;
$cp_{0} \gets \emptyset$\;
$R_{1} \gets \{0\}$\;
$D \gets [min(Y),\ max(Y)]$\;
$Z_{0,0} \gets D$\;
$\widetilde{f}_{0,0} \gets F_{0} + \alpha\ (=0)$\;
\For{$t \gets 1,...,n$}{
    \For{$s \in R_{t}$}{
    $\widetilde{f}_{t,s}(\mu) \gets \widetilde{f}_{t,s}(\mu) + (y_{t}-\mu)^{2}+\beta \times g(t-1-s) - \beta \times g(t-s)$\;
    }
    $F_{t} \gets \min_{s \in R_{t}}(\mintheta_{\mu \in Z_{t,s}}(\widetilde{f}_{t,s}(\mu)))$\;
    $s_{t} \gets \argmin_{s \in R_{t}}(\min_{\mu \in Z_{t,s}}(\widetilde{f}_{t,s}(\mu)))$\;
    $cp_{t} \gets (cp_{s_{t}}, s_{t})$\;
    $\widetilde{f}_{t,t} \gets F_{t} + \alpha$\;
    $Z_{t,t} \gets D$\;
    \For{$s \in R_{t}$}{
    $Z_{t,t} \gets Z_{t,t}\ \backslash\ I_{\infty, s, t}$\;
    $\Future_{t,s} \gets sample(\{s^{'} \in \{R_{t} \cup \{t\}\}: s^{'} > s\})$\; 
    $Z_{t, s} \gets Z_{t,s} \cap (\bigcap_{s^{'} \in\ \Future_{t,s}} I_{t,s,s^{'}})$\;
    }
    $R_{t+1} \gets \{s \in \{R_{t} \cup \{t\}\}: Z_{t,s} \neq \emptyset\}$\;
}
\caption{\PhiFPOP{} \label{algo1}}
\end{algorithm}}

\section{Rcpp Implementation of \PhiFPOP\ Algorithm}\label{sec:Rcpp}

\paragraph{\PhiFPOP\ R package} The dynamic programming and functional pruning procedures describe in the  algorithm \ref{algo1} are implemented in \textit{C++}. The input and output operations are interfaced with the R programming language thanks to \textit{Rcpp} R package. The main function \verb|MsFPOP()| takes as input the sequence of obervations, a vector of weights for these obervations, the parameters $\beta$ and $\alpha$ of the multiscale penalty. The function returns the set of optimal changepoints in the sense of (\ref{mainfpoppsd}). Analogously, we implemented a version of the \PELT\ algorithm, \verb|MsPELT()|, that optimizes (\ref{mainfpoppsd}). 

\paragraph{Sampling Strategies} To recover $\Future_{t,s}$ we consider either an exhaustive sampling of all future changes $s'>s$ in $R_t$ or a uniform random subsampling of them without replacement. The main function parameter \verb|size| can be set by the user to specify for each $s$ the number of sampled $s'$. In the appendix we compare the runtime of different sampling strategies (see \ref{supp_speed_benchmark}).\\[0.1cm]

\section{Simultation Study}

\subsection{Calibration of Constants $\gamma$ and $\beta$ from The Multiscale Penalty}\label{calibration}

Paper \cite{verzelen2020optimal} does not recommend values for $\gamma$ and $\beta$ in their penalty \eqref{eq:criteria_verzelenetal}. As explained in detail below, we calibrated those values to control the percentage of falsely detecting at least one change in profiles simulated without any actual change. 

\paragraph{No change simulation} We repeatedly simulate \textit{iid} Gaussian signals of mean 0, variance 1 and varying lengths $n$ ($n \in \{10^2, 10^3, 10^4, 10^5, 2.5\times10^5\}$). On these profiles we run \PhiFPOP\ for different $\gamma$ values (ten $\gamma$ values evenly spaced on the interval $[1,20]$) and different $\beta$ values ($\beta \in \{2, 2.25, 2.5, 2.75, 3\})$.

\paragraph{Percentage of false detection} We denote $R_{>0}$ as the proportion of replicates for which \PhiFPOP\ returns at least 1 changepoint. These changepoints are false positives. Our goal is to find a combination of $\beta$ and $\gamma$ such that 
\begin{equation} \label{signi}
    R_{>0} < 0.05\ (\text{significance level}) \quad.
\end{equation}\paragraph{Empirical Results} In Figure \ref{fig:cal_phifpop_main} we observe that, by setting $\beta = 2.25$, a conservative range of $\gamma$ satisfying inequality (\ref{signi}) can be reached for $\gamma \in [7.5,10]$. Note that this interval satisfy inequality (\ref{signi}) for all tested $n$ and $\beta$ (see \ref{cal_phifpop_main_supp}). \vspace{0.3cm}

\noindent Based on these results, in the following simulations we set $\gamma = 9$ and $\beta = 2.25$\footnote{This is equivalent to setting $L = 1.125$ and $q = 8$ in equations (31) and (32) of \cite{verzelen2020optimal}} for all methods optimizing (\ref{mainfpoppsd}) (\PhiFPOP, \PhiPELT). We set $\alpha=2\sigma^2\log(n)$ for all methods optimizing (\ref{mainl2pen}) (\FPOP, \PELT).

\begin{figure}[H]
\centering
\includegraphics[scale=0.43]{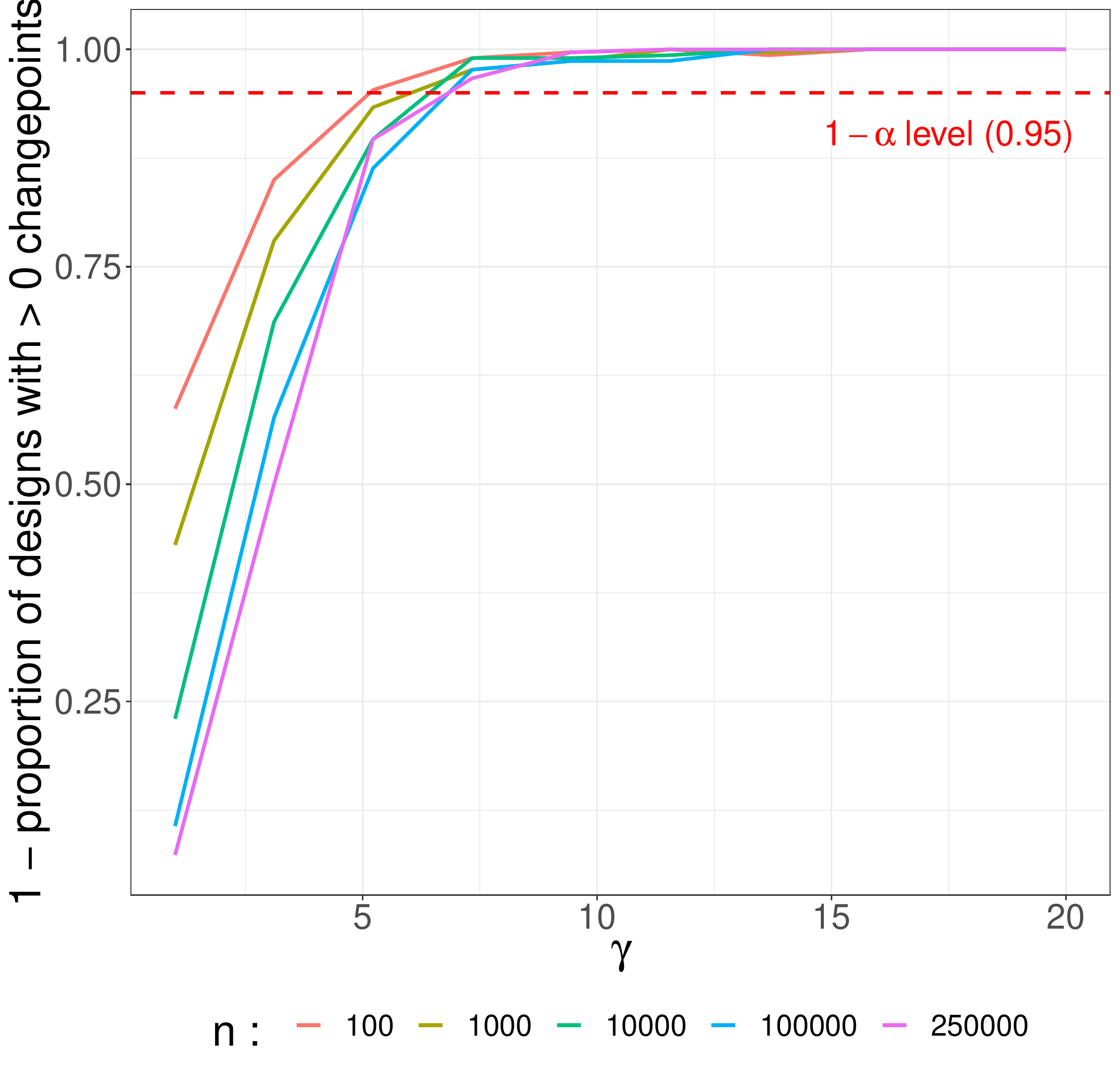}
\caption{\textbf{Proportion of stationary Gaussian signal replicates on which \PhiFPOP\ returns at least 1 changepoint ($R_{>0}$).} $R_{>0}$ is computed for a series of $\gamma$ and profile lengths (see \textit{Design of Simulations)}. In these simulations we set $\beta = 2.25$. Results for other $\beta$ values are availables in \ref{cal_phifpop_main_supp}.}
\label{fig:cal_phifpop_main}
\end{figure}

\subsection{Evalutation of \PhiFPOP: Speed Benchmark} \label{speed_bench}

\paragraph{Design of Simulations} We repeatedly simulate \textit{iid} Gaussian signals with $10^5$ datapoints. The profiles are affected by one or more changepoints in their mean ($D \in \{1, 5, 10, 15, 20, 25, 30, 45, 50, 100, 150, 200, 250, 300, 350, 400, 450,$\\$500, 550, 600, 650, 700, 750, 800, 850, 900, 950, 1000\}$). The mean of segments alternates between 0 and 1, starting with 0. The variance of each segment is fixed at 1. On these profiles we run two methods optimizing the penalized likelihood defines in (\ref{mainl2pen}): \PELT\ \cite{PELT} and \FPOP\ \cite{FPOP}, as well as methods optimizing the multiscale penalized likelihood defines in (\ref{mainfpoppsd}): \PhiPELT{} and \PhiFPOP{}. For \PhiFPOP{}, after comparisons with other sampling strategies (see \ref{supp_speed_benchmark}), we choose to randomly sample 1 future candidate change.

\paragraph{Metric} For each replicate we time in seconds the compared methods.

\paragraph{Empirical Results} In Figure \ref{fig:runtime} we firstly observe  that for both criteria (multiscale penalized likelihood and penalized likelihood), functional pruning methods are always faster than inequality based pruning ones. Indeed, \PhiFPOP{} and \FPOP{} are always faster than \PhiPELT{} and \PELT, respectively. The smaller $D$, the larger the time difference between functional pruning methods and inequality based pruning ones. For $D=1$, \PhiFPOP{} runs in 2.4 seconds in average and is about 50 times faster than \PhiPELT{} (121.3 seconds in average). For $D=1000$, \PhiFPOP{} runs in 0.7 second in average and is about 1.3 times faster than \PhiPELT{} (0.9 second in average). Marginally to $D$, \FPOP{} runs always under 0.05 seconds. Similar trends can be observed on \textit{iid} Gaussian signals with $10^6$ datapoints (see Figure \ref{fig:speed1M}).

\begin{figure}[H]
\centering
\includegraphics[scale=0.43]{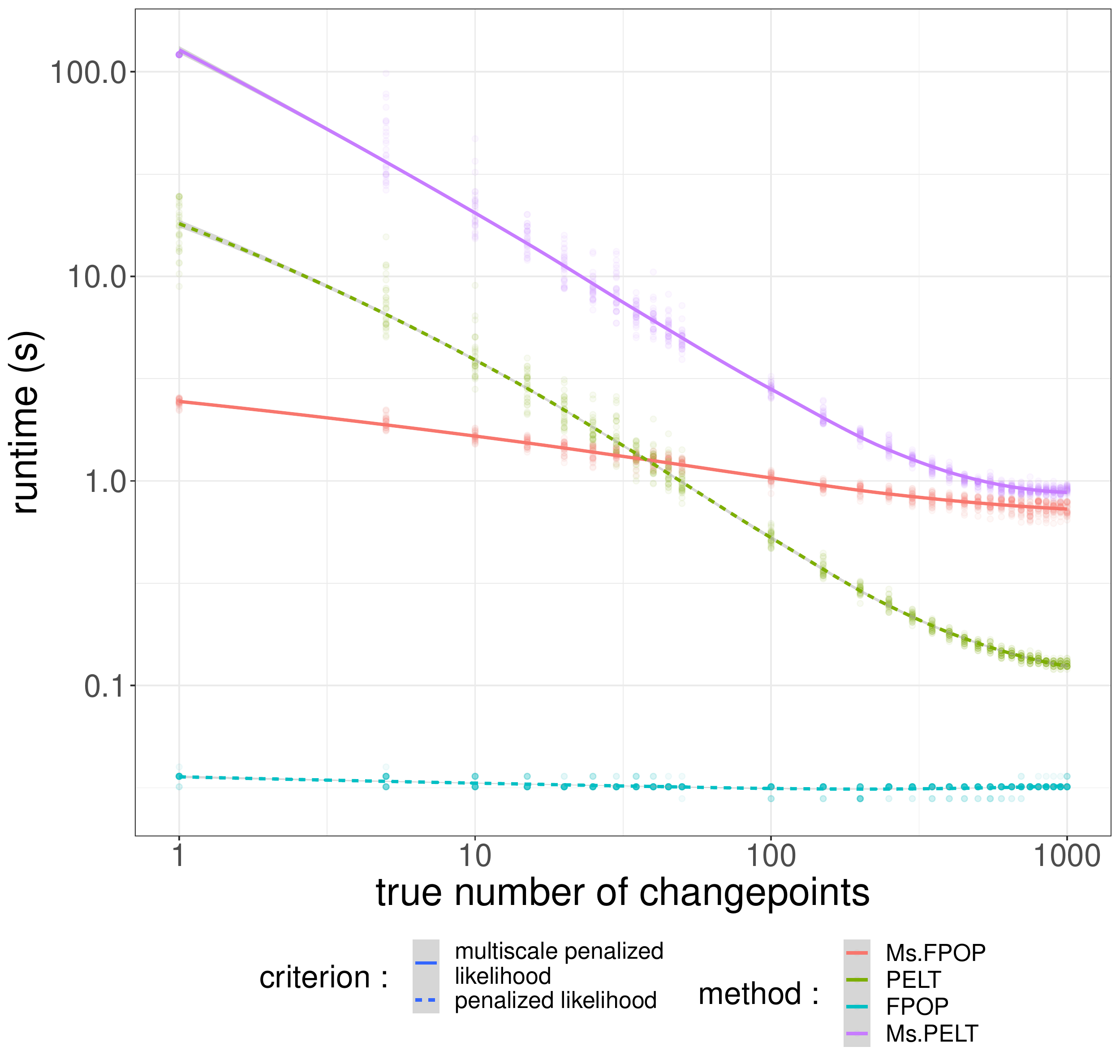}
\caption{\textbf{Runtimes as a function of the true number of changepoints.} We timed \PELT, \PhiPELT{}, \FPOP, \PhiFPOP{} on profiles of length $n=10^5$ with varying number of true changepoints $D$ (see \textit{Design of Simulations}) on an Intel Core i7-10810U CPU @ 1.10GHzx12 computer. The comparison between sampling strategies of future candidate changes implemented in \PhiFPOP\ is available in Figure \ref{fig:subsamplingspeed}. The comparison of \PELT, \FPOP, \PhiFPOP{} on profiles of length $n=10^6$ is available in Figure \ref{fig:speed1M}.}
\label{fig:runtime}
\end{figure}

\subsection{Evalutation of \PhiFPOP{} relative to \FPOP: Accuracy Benchmarks}

In this section we seek to illustrate using minimalist simulations the performances of the multiscale criteria proposed in \cite{verzelen2020optimal} and implemented in \PhiFPOP{} relative to the BIC criteria proposed in \cite{Yao} and implemented in \FPOP{}.

\subsubsection{Hat Simulations} 
\label{text:hat_simu}
\paragraph{Design of Simulations} We repeatedly simulate \textit{iid} Gaussian signals of varying size $n \in \{10^3, 10^4, 10^5\}$. Each signal is affected by 2 changepoints. The second changepoint ($\tau_2$) is fixed at position $\lfloor\frac{2n}{3}\rfloor$ while we vary the position of the first changepoint ($\tau_1$) (see Figure \ref{fig:hat_fig}.A). $\tau_1$ takes a series of 30 positive integers evenly spaced on the $\log$ scale on the interval $[1, \lfloor\frac{n}{3}\rfloor]$. We also look at the symmetry of this series builds around $\lfloor\frac{n}{3}\rfloor$ (i.e. $\lfloor\frac{2n}{3}\rfloor-\tau_1$, see dotted line in Figure \ref{fig:supphat_simu}). Note that for $\tau_1 = \lfloor\frac{n}{3}\rfloor$ the segmentation is balanced. The means of the three resulting segments are set to $\mu_1 = 0$ , $\mu_2=\sqrt{\frac{100}{n}}$ and $\mu_3 = 0$. We run both \PhiFPOP{} and \FPOP{} on these profiles. \PhiFPOP{} incorporates a multiscale penalty, while \FPOP{} assigns equal weight to all segment sizes and serves as a reference point for comparison with \PhiFPOP. We anticipate that the multiscale penalty in \PhiFPOP{} will lead to more accurate segmentations of profiles with well-spread changepoints compared to \FPOP. Additionally, as the size of the data ($n$) increases, we expect \PhiFPOP{} to get similar performance or outperform \FPOP{} in terms of accuracy for all segment sizes.

\paragraph{Metric} We denote $R_{2}$ the proportion of replicates for which a method returns exactly two changepoints. We also denote $\Delta_{R_2}$, the $\log_2$-ratio between $R_{2}$ of \PhiFPOP{} and \FPOP.

\paragraph{Empirical Results} 
In Figure \ref{fig:hat_fig}.B and \ref{fig:supphat_simu} we observe that with both \PhiFPOP{} and \FPOP, $R_{2}$ increases when $\tau_1$ tends towards $\lfloor\frac{n}{3}\rfloor$ (balanced segmentation). Note that the maximum is reached before $\tau_1 = \lfloor\frac{n}{3}\rfloor$.

Furthermore, in agreement with our expectations, in Figure \ref{fig:hat_fig}.B we observe that $\Delta_{R_2}$ increases when $\tau_1$ tends towards $\lfloor\frac{n}{3}\rfloor$. When $n$ increases, the differences observed on small segments in favor of \FPOP{} ($\Delta_{R_2}<0$) disappear ($\Delta_{R_2}\to 0$) and the differences on other segments in favor of \PhiFPOP{} ($\Delta_{R_2}>0$) are accentuated.

\begin{figure}[H]
\centering\includegraphics[scale=0.5]{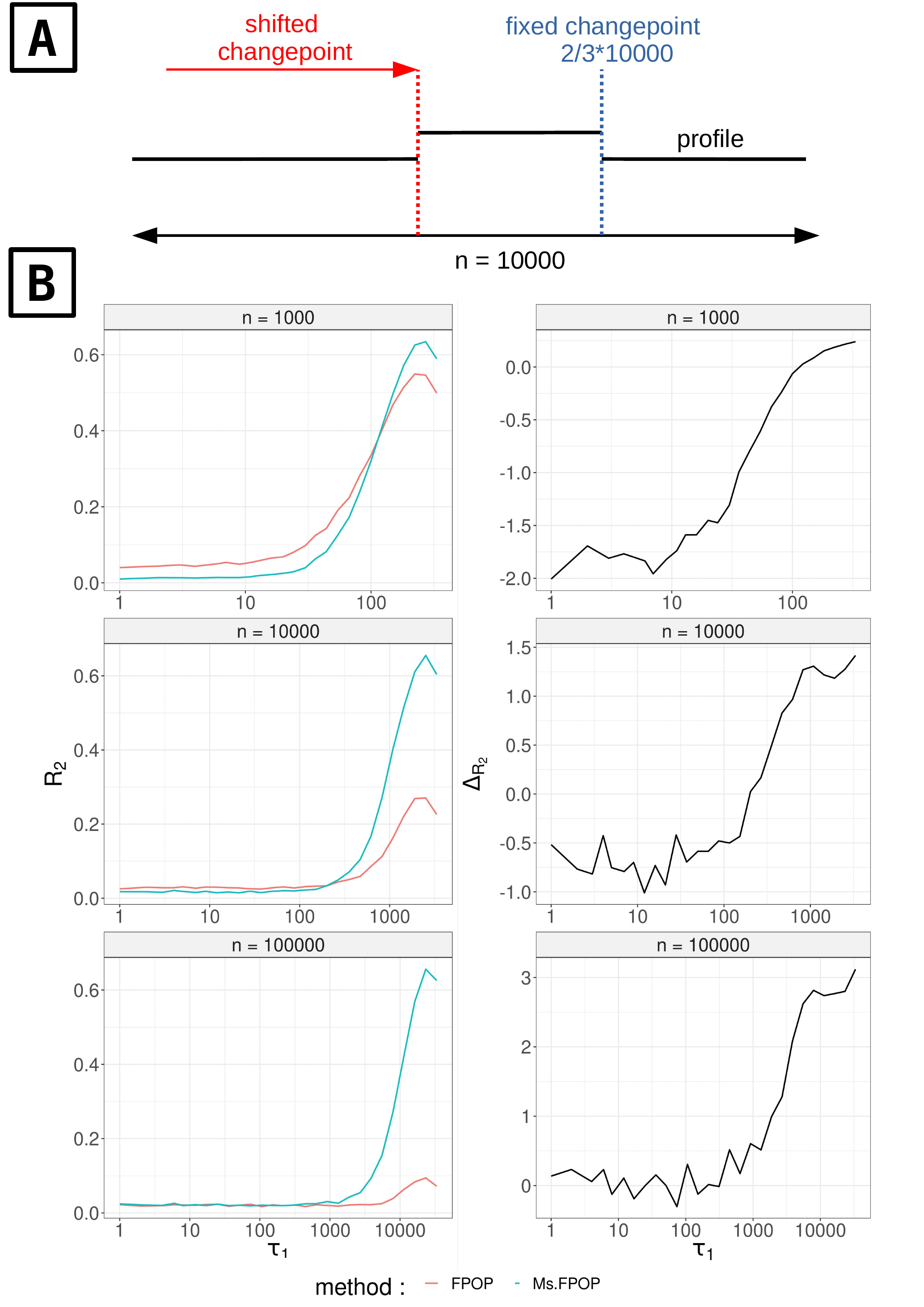}

\caption{\textbf{\PhiFPOP\ increases the probability of finding well spread changepoints on \textit{hat simulations}.} - (A) denoised profile with 2 changepoints. The second changepoint is fixed at $\lfloor\frac{2n}{3}\rfloor$ while the first one ($\tau_1$) varies on the interval $[1,\lfloor\frac{n}{3}\rfloor$]. The means of the three resulting segments are set to $\mu_1 = 0$ , $\mu_2=\sqrt{\frac{100}{n}}$ and $\mu_3 = 0$ which gives the profile a hat-like appearance. An \text{iid} Gaussian noise of mean 0 and variance one is then added (see \textit{Design of simulations}). - (B) The proportion of replicates for which \PhiFPOP\ and \FPOP\ return 2 changepoints ($R_2$) as well as the $\log_2$-ratio of the two estimations ($\Delta_{R_2}$) are computed for varying $\tau_1$ and $n$.}
\label{fig:hat_fig}
\end{figure}

\subsubsection{Extended Range of Simulation Scenarios}
\label{extended_sim}
\paragraph{Design of Simulations} Following a protocol written by Fearnhead \textit{et al.} 2020, we simulate different scenarios of \textit{iid} Gaussian signals. Each scenario is defined by a combination of $D$, $n$, $\tau$, $\mu$. For each scenario we vary the variance $\sigma^2$ (see Supplementary Data of \cite{Fearnhead2020}). All the simulated profiles, with a variance one, can be seen in \ref{supp:other_simu_default}. Based on these initial scenarios we simulate another set of profiles in which profile lengths are multiplied so that each segments contain at least 300 datapoints. These new set of simulated profiles can be seen in \ref{supp:other_simu_min300}. For each scenario and tested $\sigma^2$ we simulate 300 replicates.

\paragraph{Metric} We denote $AE\%$, the average number of times a method is at least as good as other methods in terms of absolute difference between the true number of changes and the estimated number of changes ($\Delta_D$), mean squared error (MSE) or adjusted rand index (ARI). The closer to 100 (AE\%), the better the method. See Supplementary Data of \cite{Fearnhead2020} for a formal definition of this criterion. 

\paragraph{Empirical Results} On the simulation of \cite{Fearnhead2020} in which a large portion of the segments have a length under 100 the performance of \PhiFPOP\ are worse than \FPOP\ and \textbf{MOSUM} \cite{meier2021mosum} on almost all scenarios except \textit{Dt7} that do not contain any changepoint (see \ref{supp:other_simu_default}).

On the second set of profiles, using $\Delta_D$ as comparison criterion, we observe on Figure \ref{fig:AE_perc_K_min300} that \PhiFPOP\ get similar performance or is better than \FPOP\ and \textbf{MOSUM} in all scenarios marginaly to $\sigma^2$. The results are similar when we use MSE or ARI as a criterion of comparison (see \ref{supp:other_simu_min300}). 

\begin{figure}[h!]
    \centering\includegraphics[scale=0.33]{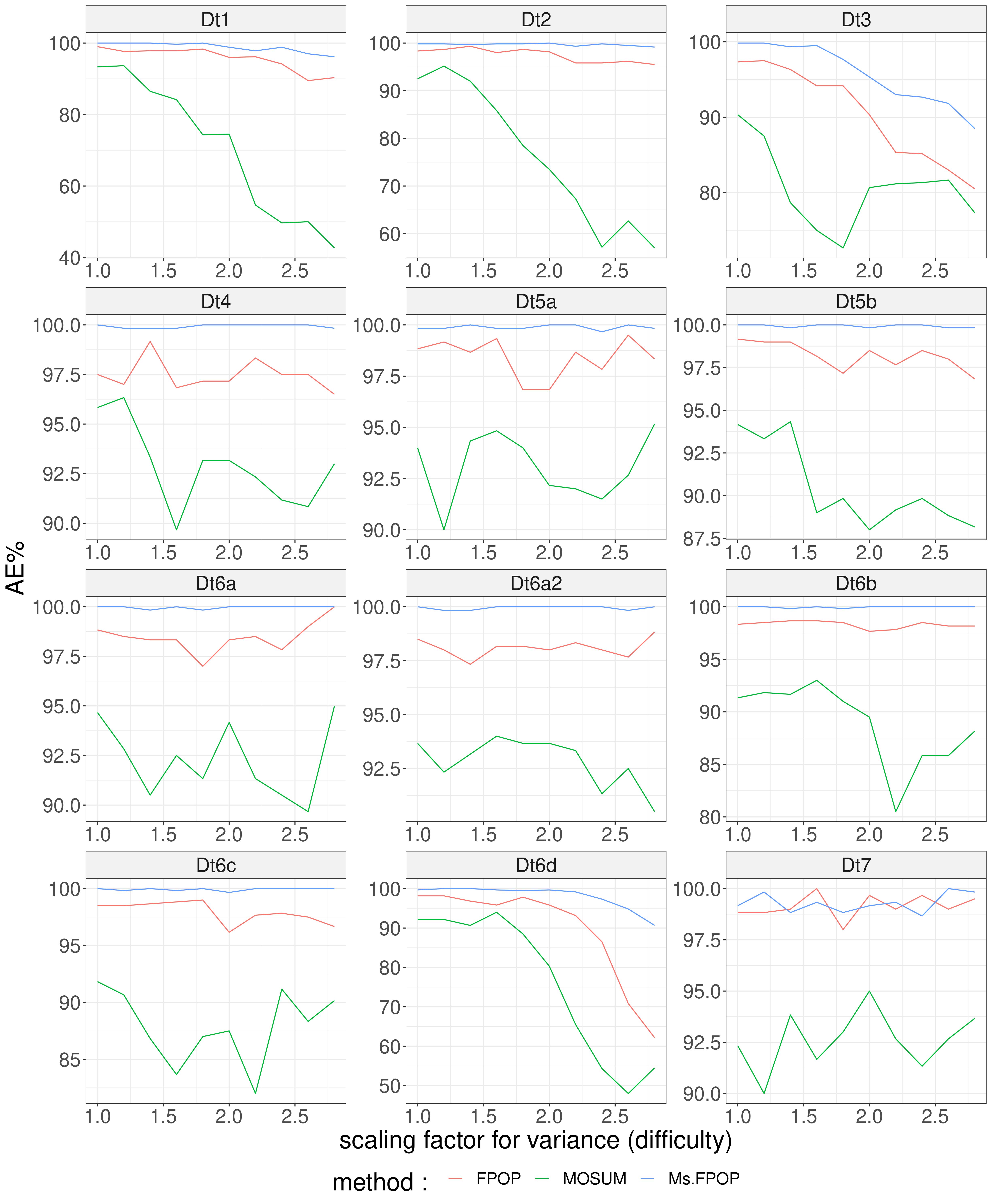}
    \vspace{-0.5cm}\caption{\textbf{AE\% as a function of the scaling factor for the variance (comparison criterion : $\Delta_D$).} The average number of times a method is at least as good as other methods in terms of $\Delta_D$ is computed for \FPOP, \PhiFPOP, and \textbf{MOSUM} on different scenarios of \textit{iid} Gaussian signals and varying $\sigma^2$. The smallest segment length is greater or equal to 300 (see \textit{Design of Simulations}). Each panel stands for the results on one scenario. Corresponding profiles can be viewed in \ref{supp:other_simu_min300}.}
    \label{fig:AE_perc_K_min300}
\end{figure}

\section{Discussion}

\paragraph{Extending Functional Pruning Techniques to the Multiscale Penalty} In section \ref{new} we have explained how to extend functional pruning techniques to the case of multiscale penalty. In Figures \ref{fig:simple_runtime} and \ref{fig:runtime} we have seen that for large signals ($n\geq10^5$) with few changepoints,  \PhiFPOP\ is an order of magnitude faster than \PhiPELT\ (which relies on inequality based pruning, see \ref{app:phi_pelt}). Even when the number of changepoints increased linearly with the size of the data, \PhiFPOP\ was still faster than \PhiPELT. 

The main update rule \ref{update_croissant} of our dynamic programming algorithm suggests to compare each candidate change $s$ with a set of future candidate changes $s'$. As we have seen in \ref{supp_speed_benchmark}, the strategy of randomly drawing one $s'$ according to a uniform distribution is the best strategy and allows us to tackle large signals. It is likely that uniform sampling is not optimal. The algorithm alternates between good draws (leading to a strong reduction of $Z_{t,s}$ or even the pruning of $s$) and bad draws (leading to a weak reduction $Z_{t,s}$). On average this is sufficient but improvements are possible. In particular the study of $h(t,s,s')=\log(\frac{t-s'}{t-s})$ (see Assumption A1), suggests disfavoring $s'$ that are too recent or that have been compared recently.

\paragraph{Calibration of $\gamma$ and $\beta$ from the Multiscale Penalty} The least-squares estimator with multiscale penalty proposed by \cite{verzelen2020optimal} involves two constants $\gamma$ and $\beta$ that still need to be investigated. Using signals simulated under the null hypothesis (no changepoint) we have seen that it is possible to find a pair of constants $\gamma = 9$ and $\beta = 2.25$ for which \PhiFPOP{} controls $R_{>0}$. Under this setting we have shown on \textit{hat} (see  section \ref{text:hat_simu}) and \textit{step} (see Figure \ref{fig:suppstep_simu}) simulations that \PhiFPOP\ is more powerful than \FPOP\ on segmentations with well-spread changepoints. This difference of power grows with $n$. For segmentation with small segments \FPOP\ is more powerful \PhiFPOP\ when $n$ is small ($\approx 10^3$), but for larger $n$ ($\geq 10^4$) this difference disappears.

We also tested \PhiFPOP{} on the benchmark proposed in \cite{Fearnhead2020}. The performances of \PhiFPOP{} are not so good on the original benchmark containing mostly small profiles with small segments but much better for an extended benchmark with larger profiles (see section \ref{extended_sim}). 

Without additional work on the calibration of the constants, we would thus recommend using \PhiFPOP for large profiles ($\geq 10^4$).

\paragraph{Unknown Variance} All our simulations have been done on signals with known variance, $\sigma^2$. However, in real-world situations, this may not always be the case. One approach is to estimate $\sigma^2$ and then plugging-in it in the problem, \textit{i.e} scaling the signal or the penalty by $\frac{1}{\sigma^2}$ or $\sigma^2$, respectively. A robust estimate of $\sigma^2$ can be obtained by calculating the variance of $\Delta_Y = Y_{i+1} - Y_i$ using either the median absolute deviation or the estimator suggested in \cite{hall1990asymptotically}. As an alternative, \cite{verzelen2020optimal} pointed out that one could calibrate the multiplicative constant $L$ of the penalized least-squares estimator using the slope heuristic \cite{Arlot2019}. Investigating the performances of these various approaches is outside the scope of this paper.

\section{Availability of Materials}
The scripts used to generate the figures are available in the following GitHub repository: \url{https://github.com/aLiehrmann/MsFPOP_paper}
. A reference implementation of the \PhiFPOP{} (and \PhiPELT) algorithm is available in the R package of the same name: \url{https://github.com/aLiehrmann/MsFPOP}.

\appendix

\section{\PELT\ for Multiscale Penalized Likelihood}\label{app:phi_pelt} 

\noindent Following the notation of the PELT paper \cite{PELT} the cost of a segment from $s+1$ to $s'$, $s+1:s'$ is defined as $\mathcal{C}_{s+1:s'} = \sum_{i=s+1}^{s'} (y_{i}-\bar{y}_{s+1:s'})^{2} - \beta g(s'-s).$
In what follow we consider three time points $s<s'<t$. Let $\ell = s'-s$ denote the length of the sequence of observations between time $s$ and $s'$ and $\ell' = t-s'$ denote the length of the sequence of observations between time $s'$ and $t$. 

The key condition to apply the PELT algorithm \cite{PELT} is that up to a constant $K$ adding a changepoints always reduce the cost, that is :
\begin{assumption}\label{eq:peltcondition}
 \begin{equation}
    \mathcal{C}_{s+1:s'} + \mathcal{C}_{s'+1:t} + K \leq \mathcal{C}_{s+1:t}
\end{equation}
\end{assumption}

The following lemma ensure that such $K$ exists for any $n$ and provide explicit values for $K$ in general and if $g$ is concave.

\begin{lemma}\label{lemma:peltbound}
(a) For any function $g$ from $\mathbb{R}$ to $\mathbb{R}$, $\beta \geq 0$, and any $n$, Assumption \ref{eq:peltcondition} is true at least for $K=2 \beta \min_{1 \leq \ell \leq n } \{g(\ell)\} -\beta \max_{1 \leq \ell \leq n } \{g(\ell)\}$.
(b) If $g$ is concave the condition is true for $K=-\beta g(2) + 2\beta g(1).$

\end{lemma}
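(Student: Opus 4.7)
The plan is to reduce the Assumption \ref{eq:peltcondition} to a purely analytic inequality on $g$ by first exploiting the classical analysis-of-variance inequality for the sum-of-squares part, and then handling the penalty part separately via a general bound in (a) and a concavity argument in (b).

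First I would expand both sides of Assumption \ref{eq:peltcondition} using the definition of $\mathcal{C}$. With $\ell=s'-s$ and $\ell'=t-s'$ this rearranges to
\begin{equation*}
K \;\leq\; \Bigl[\sum_{i=s+1}^{t}(y_i-\bar y_{s+1:t})^2 - \sum_{i=s+1}^{s'}(y_i-\bar y_{s+1:s'})^2 - \sum_{i=s'+1}^{t}(y_i-\bar y_{s'+1:t})^2\Bigr] + \beta\bigl(g(\ell)+g(\ell')-g(\ell+\ell')\bigr).
\end{equation*}
The bracketed data-dependent term is always non-negative: splitting a segment at $s'$ can only reduce the residual sum of squares (this is the standard orthogonal decomposition $\mathrm{TSS} = \mathrm{WSS} + \mathrm{BSS}$ applied to the two sub-segments, which yields the difference as $\ell\,\ell'/(\ell+\ell')\,(\bar y_{s+1:s'}-\bar y_{s'+1:t})^2 \geq 0$). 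Consequently it suffices to choose $K$ such that
\begin{equation*}
K \;\leq\; \beta\bigl(g(\ell)+g(\ell')-g(\ell+\ell')\bigr)\qquad \text{for all admissible } \ell,\ell'\geq 1.
\end{equation*}

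For part (a), I would use only the crudest bounds: $g(\ell)\geq \min_{1\leq k\leq n}g(k)$ and $g(\ell')\geq \min_{1\leq k\leq n}g(k)$, while $g(\ell+\ell')\leq \max_{1\leq k\leq n}g(k)$ (valid since $\ell+\ell'\leq n$). This immediately yields $K = 2\beta\min_{1\leq k\leq n}g(k) - \beta\max_{1\leq k\leq n}g(k)$, which is the claimed expression.

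For part (b), the key observation is that for concave $g$ the discrete increments $k\mapsto g(k+1)-g(k)$ are non-increasing. I would apply this twice: first, since $\ell'\geq 1$, the chain of increments from $\ell'$ to $\ell+\ell'$ is dominated term-wise by the chain from $1$ to $\ell+1$, giving $g(\ell+\ell')-g(\ell') \leq g(\ell+1)-g(1)$, hence $g(\ell)+g(\ell')-g(\ell+\ell') \geq g(\ell)-g(\ell+1)+g(1)$. Second, the single increment $g(\ell+1)-g(\ell)$ is itself dominated by $g(2)-g(1)$, so $g(\ell)-g(\ell+1) \geq g(1)-g(2)$. Combining these gives $g(\ell)+g(\ell')-g(\ell+\ell') \geq 2g(1)-g(2)$, which yields the claimed $K=-\beta g(2)+2\beta g(1)$.

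The only slightly delicate step is part (b), where one has to apply the concavity-induced monotonicity of discrete differences in two distinct places and make sure both are uniform over all $\ell,\ell'\geq 1$; everything else reduces to the standard between/within-group variance decomposition and elementary bounds.
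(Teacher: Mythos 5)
Your proof is correct and follows essentially the same route as the paper's: isolate the penalty term via the standard within-segment sum-of-squares inequality, then bound $g(\ell)+g(\ell')-g(\ell+\ell')$ crudely by $2\min g - \max g$ for (a) and by $2g(1)-g(2)$ via concavity (non-increasing increments, applied twice) for (b). The only cosmetic difference is that you telescope unit increments where the paper invokes a small lemma stating that $x \mapsto g(x+\delta)-g(x)$ is non-increasing for concave $g$, and the paper states the concave bound as an exact minimum (reused in its Appendix B) whereas your one-sided inequality already suffices for this lemma.
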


\begin{proof}
We first note that 
\begin{displaymath}
m_n = \underset{1 \leq \ell < n}{\min} \left\{ \underset{\substack{1 \leq \ell' < n \\ \ell+\ell' \leq n}}{\min} \left\{ g(\ell)+g(\ell') - g(\ell+\ell') \right\} \right\}
\end{displaymath}
is well defined as the minimum of a finite set.
By definition of $m_n$ we thus have, for any $1 \leq s < s' < t \leq n$ and for any $K < \beta m_n$, that 
\begin{eqnarray*}
-\beta g(s'-s) - \beta g(t-s') + K & \leq & -\beta g(t-s)
\end{eqnarray*}
Combining this with 
\begin{eqnarray*}
\sum_{i=s+1}^{s'} (y_{i}-\bar{y}_{s+1:s'})^{2} + \sum_{i=s'+1}^{t} (y_{i}-\bar{y}_{s'+1:t})^{2} & \leq & \sum_{i=s+1}^{t} (y_{i}-\bar{y}_{s+1:t})^{2}.
\end{eqnarray*}
we recover that equation \eqref{eq:peltcondition} is true for any $K < \beta m_n.$

Now for any $\ell$, $\ell'$ in $\{1, \ldots, n\}^2$ such that $\ell + \ell' \leq n$ we have
\begin{displaymath}
2 \min_{1 \leq \ell \leq n } \{g(\ell)\} - \max_{1 \leq \ell \leq n } \{g(\ell)\} \leq g(\ell) + g(\ell') - g(\ell+\ell').
\end{displaymath}
Hence we get 
\begin{displaymath}
2 \min_{1 \leq \ell \leq n } \{g(\ell)\} - \max_{1 \leq \ell \leq n } \{g(\ell)\} \leq m_n,
\end{displaymath}
and we recover (a).

In case $g$ is concave using the technical lemma \ref{lemma:concavitydecreasing} two times we get :
\begin{equation}\label{eq:lspecificPELT}
\underset{\substack{1 \leq \ell' < n \\ \ell+\ell' \leq n}}{\min} \left\{ g(\ell)+g(\ell') - g(\ell+\ell') \right\} = g(\ell) + g(1) - g(\ell+1)
\end{equation}
and 
\begin{displaymath}
\underset{1 \leq \ell < n}{\min} \left\{ \underset{\substack{1 \leq \ell' < n \\ \ell+\ell' \leq n}}{\min} \left\{ g(\ell)+g(\ell') - g(\ell+\ell') \right\} \right\} = 2g(1) - g(2) 
\end{displaymath}

For example, if $g=\log$ we get $K= -\beta\log(2)$
\end{proof}

\begin{lemma}\label{lemma:concavitydecreasing}
If $g$ is concave then for any $\delta>0$, the function $h: x \rightarrow g(x+\delta) - g(x)$ is non increasing.
\end{lemma}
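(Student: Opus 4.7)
The plan is to derive the non-increasing property of $h$ directly from the definition of concavity, by exploiting the symmetry of the four points $x_1, x_2, x_1+\delta, x_2+\delta$ when $x_1 < x_2$.

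First I would fix arbitrary $x_1 < x_2$ and reduce the claim $h(x_1) \geq h(x_2)$ to the equivalent inequality
\[
g(x_1 + \delta) + g(x_2) \geq g(x_1) + g(x_2 + \delta).
\]
The key observation is that $x_1 < x_2 < x_2+\delta$ and $x_1 < x_1+\delta < x_2+\delta$ (since $\delta>0$), so both $x_1+\delta$ and $x_2$ lie inside the interval $[x_1, x_2+\delta]$. Hence each can be written as a convex combination of the endpoints $x_1$ and $x_2+\delta$.

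Next I would compute the two barycentric coefficients explicitly. Setting $\lambda = \frac{x_2 - x_1}{x_2 - x_1 + \delta}\in(0,1)$, a direct calculation gives
\[
x_1 + \delta = \lambda\, x_1 + (1-\lambda)(x_2+\delta), \qquad x_2 = (1-\lambda)\, x_1 + \lambda\,(x_2+\delta).
\]
The crucial point — and the step that makes the argument work — is that the two weights are complementary: the weight of $x_1$ in one combination equals the weight of $x_2+\delta$ in the other. Applying concavity of $g$ to each of the two convex combinations yields two inequalities whose right-hand sides, when summed, give exactly $g(x_1) + g(x_2+\delta)$ on the right (the coefficients $\lambda$ and $1-\lambda$ combine to $1$ on each of $g(x_1)$ and $g(x_2+\delta)$).

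Adding the two concavity inequalities then produces the desired inequality, and hence $h(x_1)\geq h(x_2)$. The main obstacle is really just the bookkeeping of writing $x_1+\delta$ and $x_2$ as convex combinations of $x_1$ and $x_2+\delta$ with matched weights; once that symmetry is spotted, the rest is a one-line application of concavity. Note that no differentiability or continuity of $g$ is needed beyond what concavity itself provides.
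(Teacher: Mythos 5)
Your proposal is correct and matches the paper's argument: with the substitution $x=x_1$, $\delta'=x_2-x_1$, the paper likewise writes the two interior points as convex combinations of the endpoints $x$ and $x+\delta+\delta'$ with complementary weights ($\alpha+\alpha'=1$), applies concavity twice, and sums. So this is the same proof up to a change of variables.
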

\begin{proof}
Consider any $\delta' > 0$. 
We have 
$x+\delta = (1-\alpha) x + \alpha(x+\delta+\delta')$ for $\alpha = \delta/(\delta+\delta')$ and similarly
$x+\delta' = (1-\alpha') x + \alpha'(x+\delta+\delta')$ with $\alpha' = \delta'/(\delta+\delta').$
Using concavity we have
\begin{eqnarray*}
g(x+\delta) & \geq &(1-\alpha) g(x) + \alpha g(x+\delta+\delta') \\
g(x+\delta') & \geq & (1-\alpha') g(x) + \alpha' g(x+\delta+\delta'). \\
\end{eqnarray*}
Suming these two lines and noting that $\alpha+\alpha'=1$ we get 
$g(x+\delta) - g(x) \geq g(x+\delta'+\delta) - g(x+\delta')$

\end{proof}

\section{Adaptative \PELT{} for Concave Multiscale Penalty }\label{phi_pelt_2} 

In the following lemma we show that for our multiscale penalty assuming the function $g$ is concave the constant $K$ in theorme 3.1 of \cite{PELT} can be chosen adaptively to the length of the last segment.

\begin{lemma}
If $g$ is concave and $\beta \geq 0.$
then if at time $s'$ we have,
$$ F_s + \sum_{i=s+1}^{s'} (y_{i}-\bar{y}_{s+1:s'})^{2} - \beta g(\ell) + K_{s'-s=\ell} \geq F_{s'}, $$ with $K_{\ell} = \beta( g(\ell) + g(1) - g(\ell+1))$
then for any time $t$ larger than $s'$ we have :
$$ F_s + \sum_{i=s+1}^{t} (y_{i}-\bar{y}_{s+1:t})^{2} - \beta g(\ell+\ell') \geq F_{s'} + \sum_{i=s'+1}^{t} (y_{i}-\bar{y}_{s'+1:t})^{2} - \beta g(\ell'), $$
and thus
for any time $t\geq s'$, a change at $s$ can never be optimal. Taking $g=\log$ we get $K_{\ell}= -\beta\log(\frac{1}{\ell}+1) \leq -\beta\log(2)$.

\end{lemma}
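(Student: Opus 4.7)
The goal is to propagate the hypothesis, stated at time $s'$, forward to an arbitrary time $t \geq s'$, and then to deduce that a last change at $s$ can never beat a last change at $s'$. My approach would be to chain two ingredients: (i) the usual ``sum of within-segment residuals is no larger than the total sum of residuals'' inequality, and (ii) Lemma \ref{lemma:concavitydecreasing} applied to the function $g$ to control the remaining $g$-terms.

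First I would rewrite the hypothesis as
$$F_s + \sum_{i=s+1}^{s'}(y_i - \bar{y}_{s+1:s'})^2 \geq F_{s'} + \beta g(\ell) - K_\ell.$$
Adding $\sum_{i=s'+1}^{t}(y_i - \bar{y}_{s'+1:t})^2 - \beta g(\ell+\ell')$ to both sides and using the classical least-squares inequality
$$\sum_{i=s+1}^{s'}(y_i - \bar{y}_{s+1:s'})^2 + \sum_{i=s'+1}^{t}(y_i - \bar{y}_{s'+1:t})^2 \leq \sum_{i=s+1}^{t}(y_i - \bar{y}_{s+1:t})^2$$
to enlarge the left-hand side, I reduce the target inequality to checking
$$\beta g(\ell) - K_\ell - \beta g(\ell+\ell') \geq -\beta g(\ell'),$$
i.e.\ $\beta(g(\ell)+g(\ell')-g(\ell+\ell')) \geq K_\ell$.

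Substituting $K_\ell = \beta(g(\ell) + g(1) - g(\ell+1))$, this collapses to
$$g(\ell+1) - g(1) \geq g(\ell+\ell') - g(\ell'),$$
which is exactly the statement that $x \mapsto g(x+\ell) - g(x)$ is non-increasing, evaluated at $x=1$ and $x=\ell'$ with $\ell' \geq 1$. That monotonicity is given by Lemma \ref{lemma:concavitydecreasing} with $\delta = \ell$, so the inequality follows from concavity of $g$. The ``$s$ can never be optimal'' conclusion is then immediate: the derived inequality says that extending the best segmentation ending at $s$ to time $t$ (with a last segment of length $\ell+\ell'$) produces a cost at least as large as extending the best segmentation ending at $s'$ to time $t$ (with a last segment of length $\ell'$), so the latter always dominates.

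The only step needing care is the algebraic bookkeeping after adding the extra residual sum and the $-\beta g(\ell+\ell')$ term, because everything hinges on the telescoping identification of $K_\ell$ with the gap $\beta(g(\ell)+g(\ell')-g(\ell+\ell'))$ at the worst-case $\ell' = 1$. Once that is laid out, the proof is essentially a one-line application of Lemma \ref{lemma:concavitydecreasing}. The specialization to $g=\log$ giving $K_\ell = -\beta\log(1/\ell + 1) \leq -\beta\log 2$ is then just substitution.
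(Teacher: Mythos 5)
Your argument is correct and follows essentially the same route as the paper: the paper's proof simply invokes the proof of Theorem 3.1 of the \PELT{} paper together with the concavity identity \eqref{eq:lspecificPELT} (worst case $\ell'=1$, via Lemma \ref{lemma:concavitydecreasing}), which is exactly the chain you write out explicitly — least-squares subadditivity of the residual sums plus $\beta\bigl(g(\ell)+g(\ell')-g(\ell+\ell')\bigr)\geq K_\ell$. No gap; you have merely made the cited \PELT{} argument explicit.
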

\begin{proof}
We follows the proof of Theorem 3.1 of \cite{PELT} using the fact that if $g$ is concave then equation \ref{eq:lspecificPELT} is true.
\end{proof}

\section{\PhiFPOP\ : Calibration of Constants $\gamma$ and $\beta$ from The Multiscale Penalty}
\label{cal_phifpop_main_supp}

The following plots were generated to calibrate the constants in the multiscale penalty of \cite{verzelen2020optimal}. They are generated as explained in section \ref{calibration}.

\begin{figure}[H]
    \centering\includegraphics[scale=0.43]{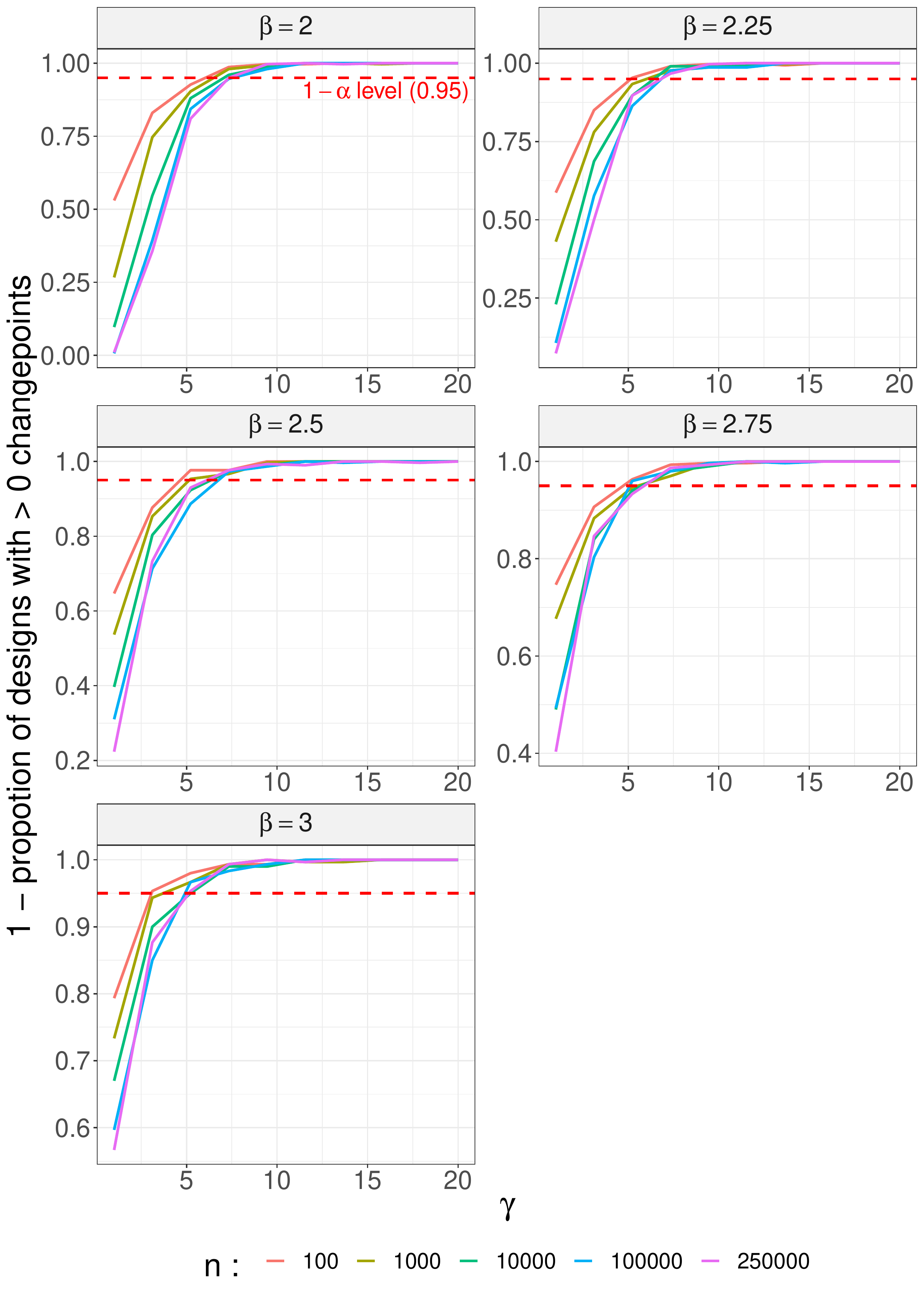}
    \caption{\textbf{Proportion of stationary Gaussian process replicates on which \PhiFPOP\ returns at least 1 changepoint ($R_{>0}$).} $R_{>0}$ is computed for a series of $\gamma$, $\beta$, and profile lengths (see \textit{Design of Simulations} in section \ref{calibration}).}
\end{figure}

\section{\PhiFPOP{} Speed Benchmark} \label{supp_speed_benchmark}

\paragraph{Sampling Strategies} We compared the runtime of \PhiFPOP{} for various sampling strategies (see section \ref{sec:subsampling}).
We tested sampling 1, 2, 3 and all future changes. We call these strategies respectively rand 1, rand 2, rand 3 and all. We tested them on the simulation described in section \ref{speed_bench}.

It can be seen on the Figure \ref{fig:subsamplingspeed} that sampling 1 future change uniformaly at random is the fastest for all true number of changes and $n=10^5.$

\begin{figure}[H]
\centering
\includegraphics[scale=0.36]{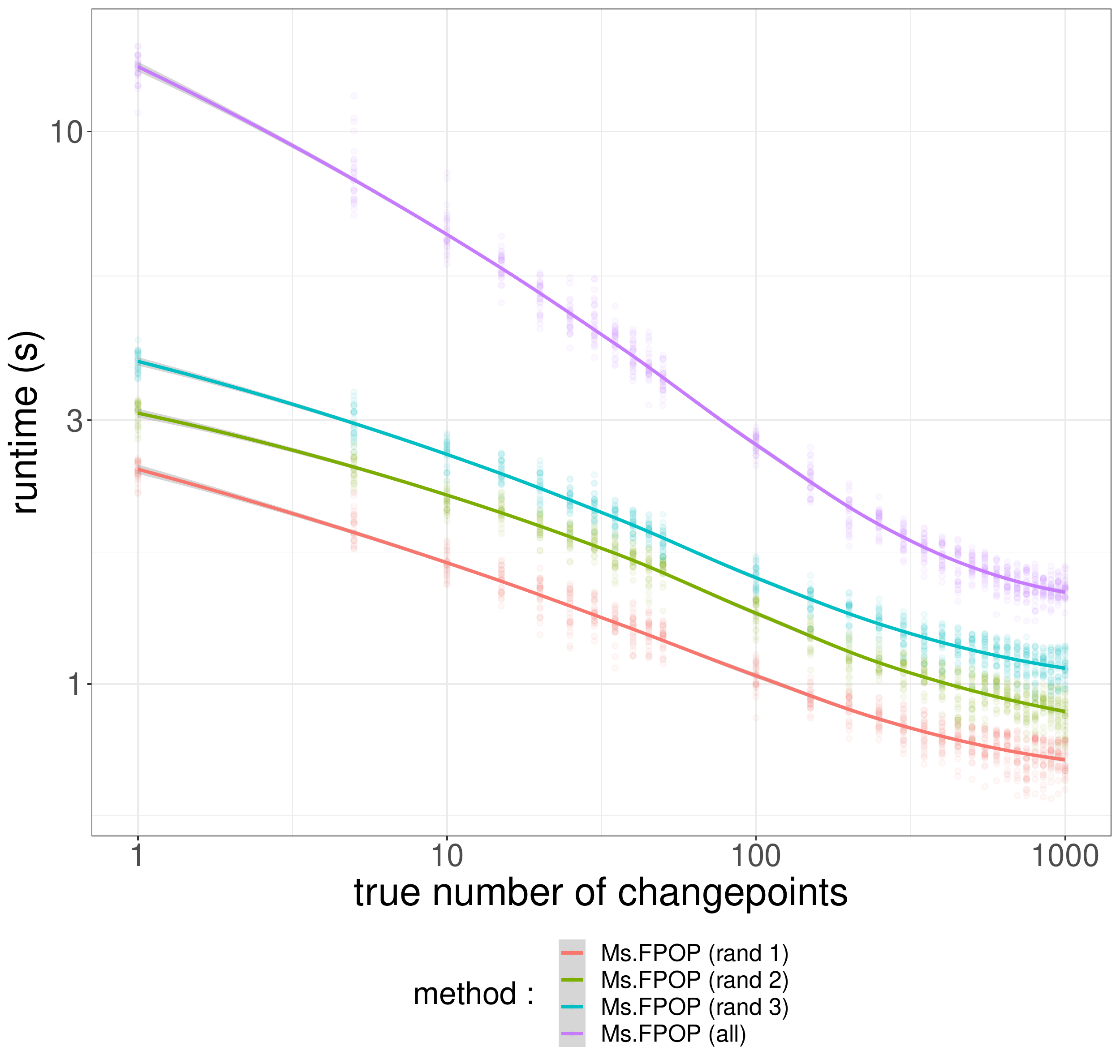}
\caption{\textbf{Runtimes as a function of the true number of changepoints.} We timed strategies consisting in randomly sampling one, two, three, four or all future candidates on profiles of length $n=10^5$ with varying number of true changepoints $D$ (see \textit{Design of Simulations} in section \ref{speed_bench}) on an Intel Core i7-10810U CPU @ 1.10GHzx12 computer. We observe that, marginaly to $D$, randomly sampling 1 future candidate (\PhiFPOP\ rand 1) is faster than randomly sampling more than one future candidates, in particular all future candidates.}\label{fig:subsamplingspeed}
\end{figure}

\paragraph{Larger Profile Lengths $(n=10^6)$} Figure \ref{fig:speed1M} is obtained as explained in section \ref{speed_bench}, with $n=10^6$ and $D \in \{1, 500, 1000, 1500, 2000, 2500, 3000, 3500, 4000, 4500,$\\$ 5000, 5500, 6000, 6500, 7000, 7500, 8000, 8500, 9000, 9500, 10000\}$.

\begin{figure}[H]
\centering
\includegraphics[scale=0.36]{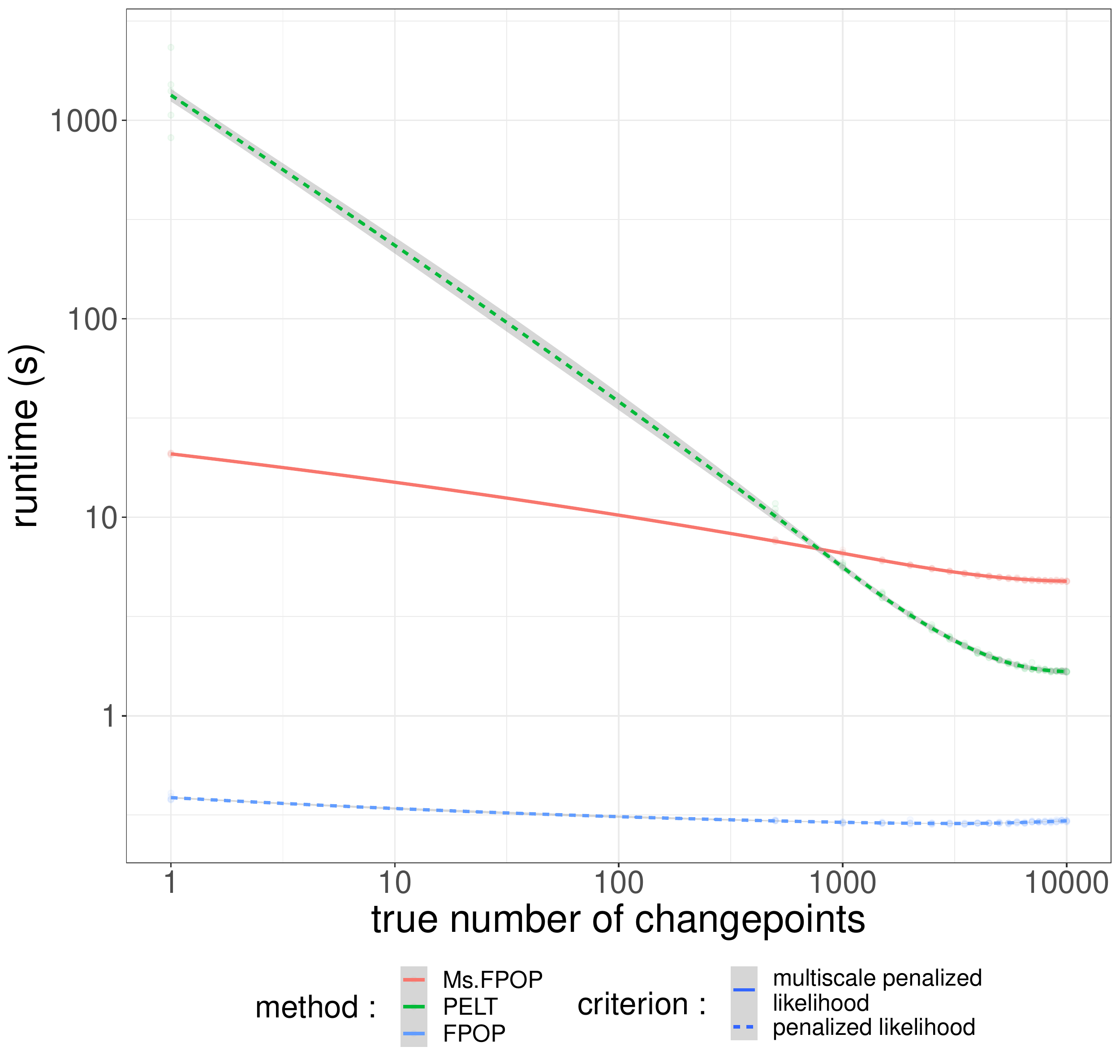}
\caption{\textbf{Runtimes as a function of the true number of changepoints.} We timed \PELT, \FPOP, \PhiFPOP{} on profiles of length $n=10^6$ with varying number of true changepoints $D$ (see \ref{supp_speed_benchmark}) on an Intel Core i7-10810U CPU @ 1.10GHzx12 computer.}\label{fig:speed1M}
\end{figure}

\section{\FPOP\ vs \PhiFPOP\ : Simulations on Hat Profiles} \label{supp:hat_simu}

Figure \ref{fig:supphat_simu} is obtained as explained in section \ref{text:hat_simu}.
\begin{figure}[h!]
    \hspace{-1cm}\includegraphics[scale=0.42]{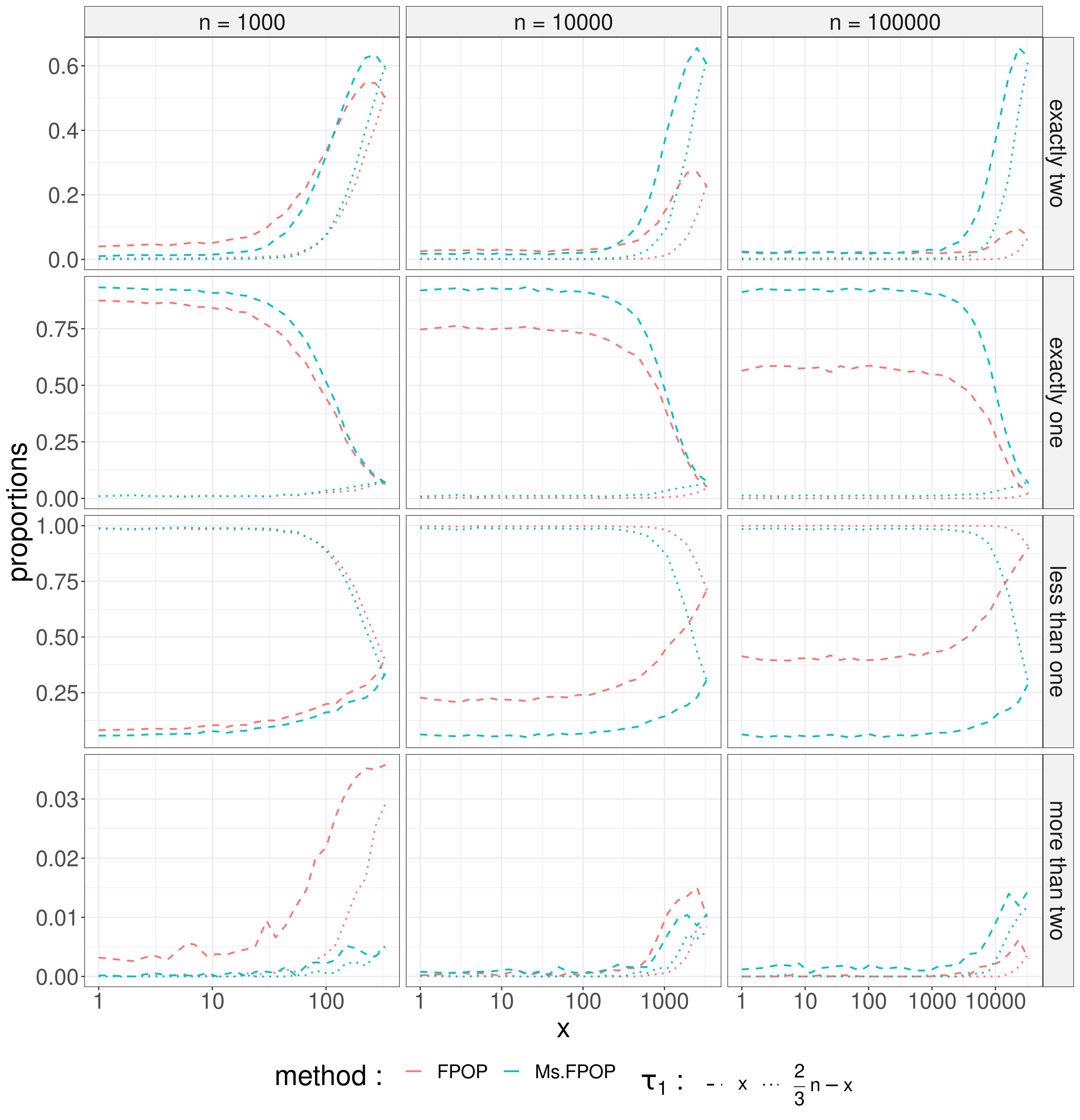}
    \caption{\textbf{\PhiFPOP\ increases the probability of finding well spread changepoints on \textit{hat profiles}.} The proportion of replicates for which \PhiFPOP\ and \FPOP\ return two changepoints, one changepoint, zero changepoint, and more than two changepoints are computed for varying $\tau_1 \in [1,\lfloor\frac{2}{3}n\rfloor-1]$ and $n \in [10^3, 10^4, 10^5]$ on the hat-like profiles (see \textit{Design of Simulations} in section \ref{text:hat_simu}).}\label{fig:supphat_simu}
\end{figure}

\section{\FPOP\ vs \PhiFPOP\ : Simulations on Step Profiles}
\label{supp:step_simu}

\begin{figure}[h!]
    \centering\includegraphics[scale=0.55]{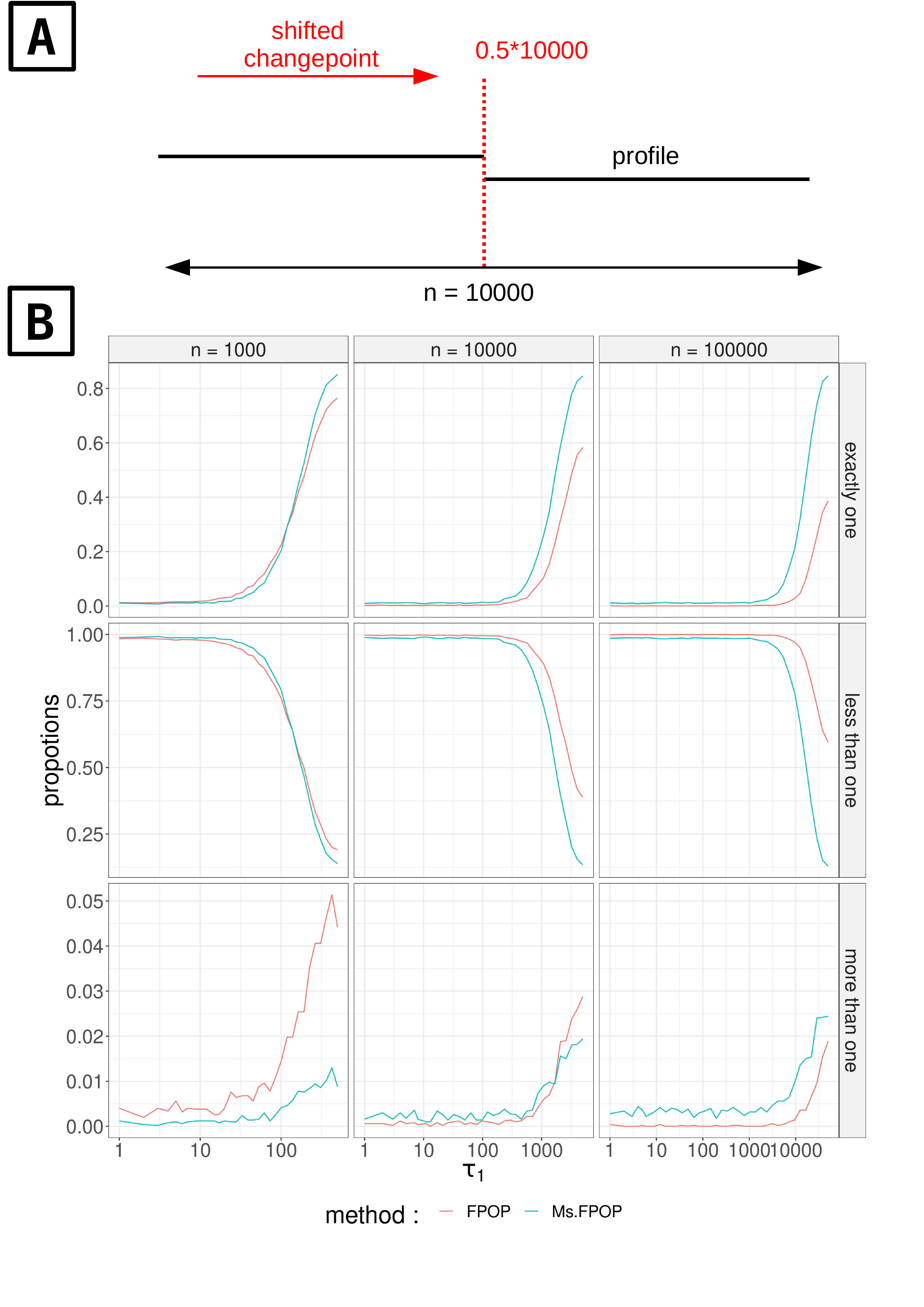}
    \vspace{-0.5cm}\caption{\textbf{\PhiFPOP\ increases the probability of finding well spread changepoint on \textit{step profiles}.} - (A) denoised profile with 1 changepoint. The first and unique changepoint ($\tau_1$) varies on the interval $[1,\lfloor\frac{n}{2}\rfloor$] (40 positive integers evenly spaced on the $\log$ scale). The difference in mean between the second and the first segment is set to $\sqrt{\frac{70}{n}}$. An \textit{iid} Gaussian noise of variance one is then added. - (B) The proportion of replicates on which \PhiFPOP\ and \FPOP\ return one changepoint, less than one changepoint and more than one changepoints are computed for varying profile lengths ($n \in {10^3, 10^4, 10^5}$) and $\tau_1$.}\label{fig:suppstep_simu}
\end{figure}

\section{\FPOP\ vs \PhiFPOP\ vs \textbf{MOSUM} : Simulations on Several Scenarios of Gaussian Signals (segments length $>300$)}
\label{supp:other_simu_min300}

Figures \ref{fig:all_scenarios_min300}, \ref{fig:AE_perc_K_min300}, \ref{fig:AE_perc_ARI_min300}, \ref{fig:AE_perc_MSE_min300} were obtained as explained in section \ref{extended_sim} when considering the benchmark in \cite{Fearnhead2020}. 
On these simulations  a large portion of the segments have a length under 100.

\begin{figure}[h!]
    \centering\includegraphics[scale=0.42]{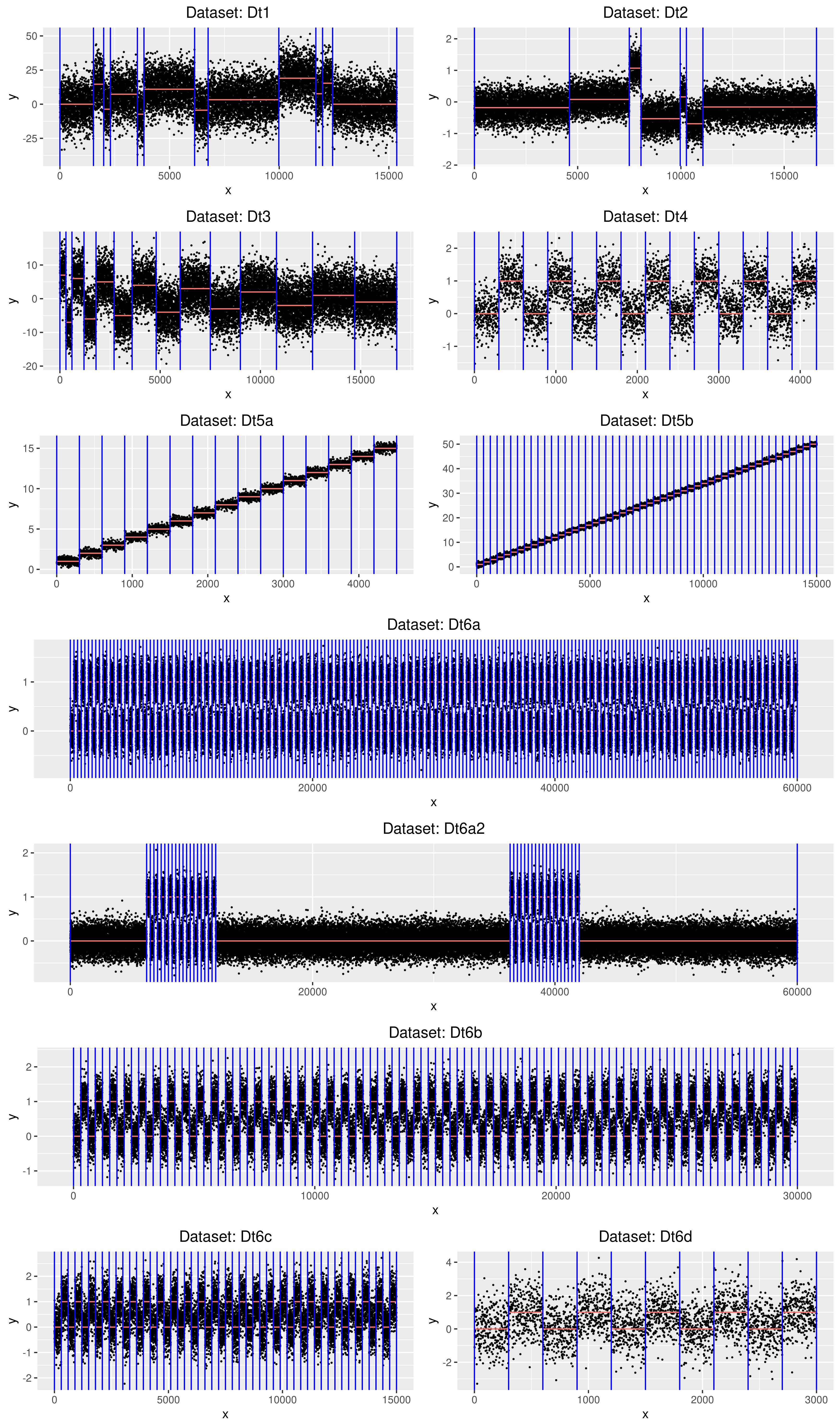}
    \vspace{-0.5cm}\caption{\textbf{Simulated scenarios of Gaussian signals with minimum segments length equal to 300.} All scenarios have been simulated following the protocol written by Fearnhead \textit{et al} 2020. The length of each segment are scaled so that, in each profile, all segments contain at least 300 datapoints.}
    \label{fig:all_scenarios_min300}
\end{figure}

\begin{figure}[h!]
    \centering\includegraphics[scale=0.33]{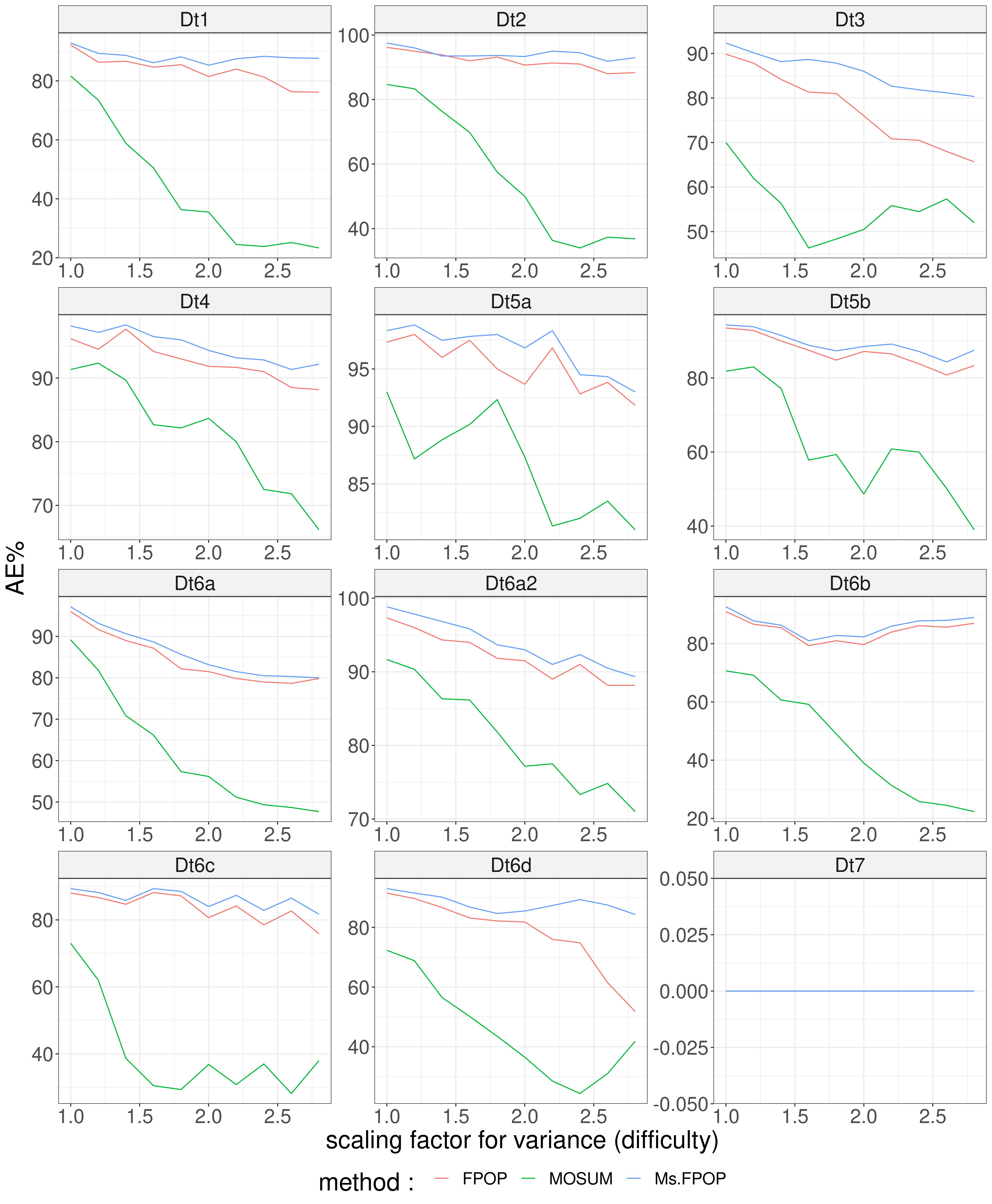}
    \vspace{-0.5cm}\caption{\textbf{AE\% as a function of the scaling factor for the variance (comparison criterion : ARI).} The average number of times a method is at least as good as other methods in terms of ARI is computed for \FPOP, \PhiFPOP, and \textbf{MOSUM} on different scenarios of \textit{iid} Gaussian signals and varying $\sigma^2$. The smallest segment length is greater or equal to 300 (see \textit{Design of Simulations}). Each panel stands for the results on one scenario. Corresponding profiles can be viewed in \ref{supp:other_simu_min300}.}
    \label{fig:AE_perc_ARI_min300}
    \label{fig:my_label}
\end{figure}

\begin{figure}[h!]
\hspace{-1.2cm}\includegraphics[scale=0.33]{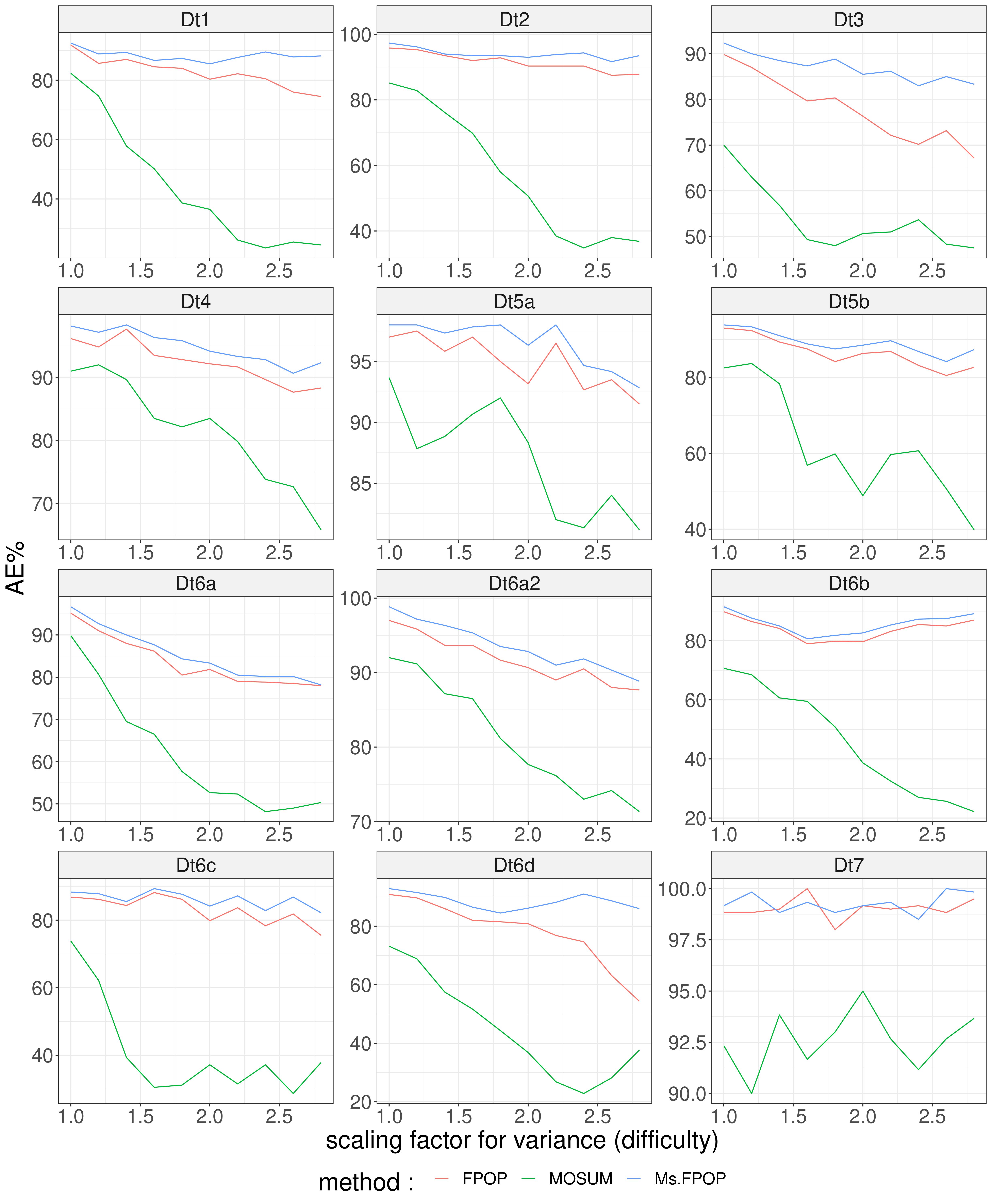}
    \centering\caption{\textbf{AE\% as a function of the scaling factor for the variance (comparison criterion : MSE).} The average number of times a method is at least as good as other methods in terms of MSE is computed for \FPOP, \PhiFPOP, and \textbf{MOSUM} on different scenarios of \textit{iid} Gaussian signals and varying $\sigma^2$. The smallest segment length is greater or equal to 300 (see \textit{Design of Simulations}). Each panel stands for the results on one scenario. Corresponding profiles can be viewed in \ref{supp:other_simu_min300}.}
    \label{fig:AE_perc_MSE_min300}
\end{figure}

\section{\FPOP\ vs \PhiFPOP\ vs \textbf{MOSUM} : Simulations on Several Scenarios of Gaussian Signals} \label{supp:other_simu_default}

Figures \ref{fig:all_scenarios}, \ref{fig:AE_perc_K}, \ref{fig:AE_perc_ARI}, \ref{fig:AE_perc_MSE} were obtained as explained in section \ref{extended_sim} when considering an extension of the benchmark in \cite{Fearnhead2020}.
Based on the initial scenarios we simulated another set of profiles in which segments length are multiplied so that each of segments contain at least 300 datapoints.

\begin{figure}[h!]
    \centering\includegraphics[scale=0.42]{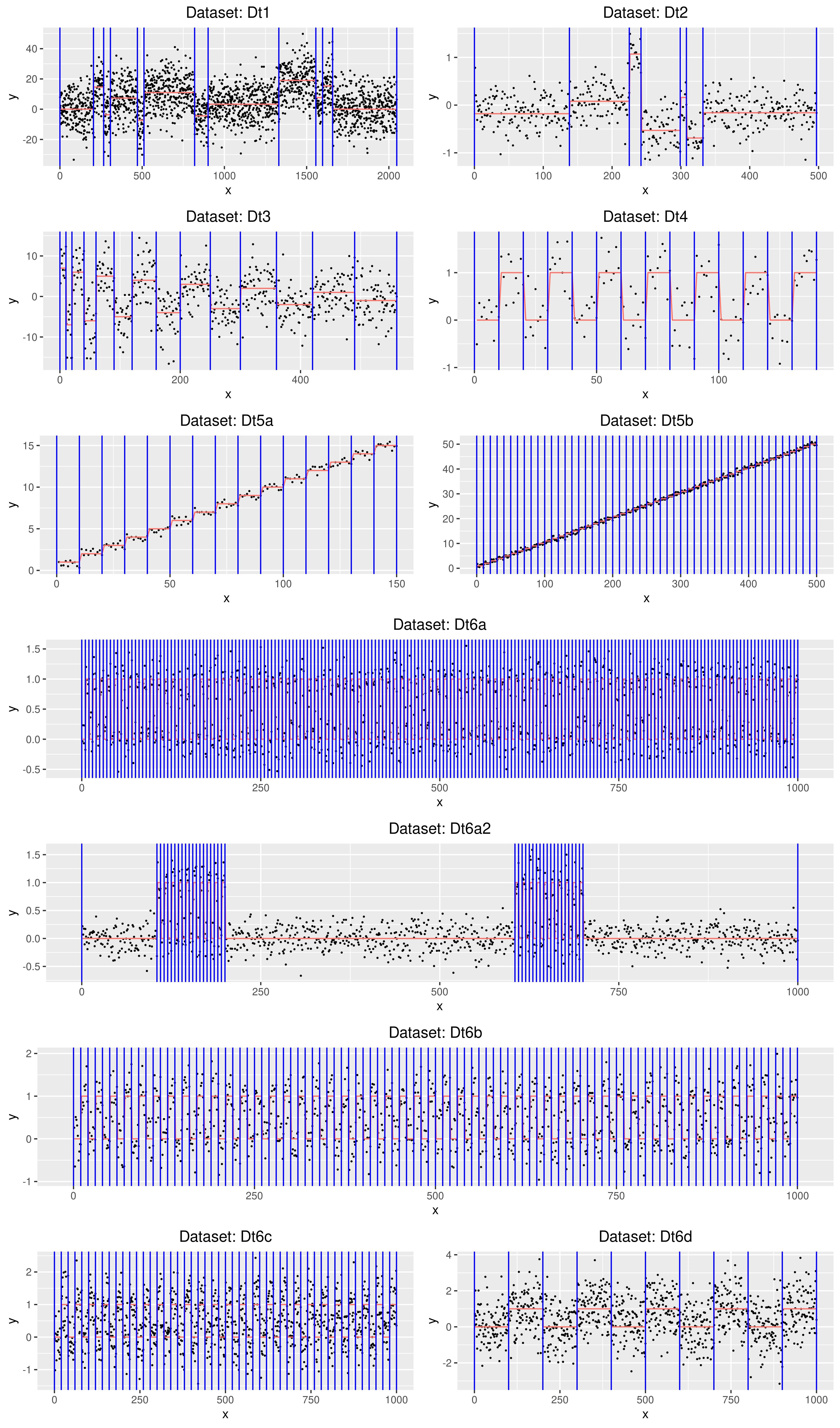}
    \vspace{-0.5cm}\caption{\textbf{Simulated scenarios of Gaussian signals.} All scenarios have been simulated following the protocol written by Fearnhead \textit{et al} 2020.}
    \label{fig:all_scenarios}
\end{figure}

\begin{figure}[h!]
\hspace{-1.2cm}\includegraphics[scale=0.33]{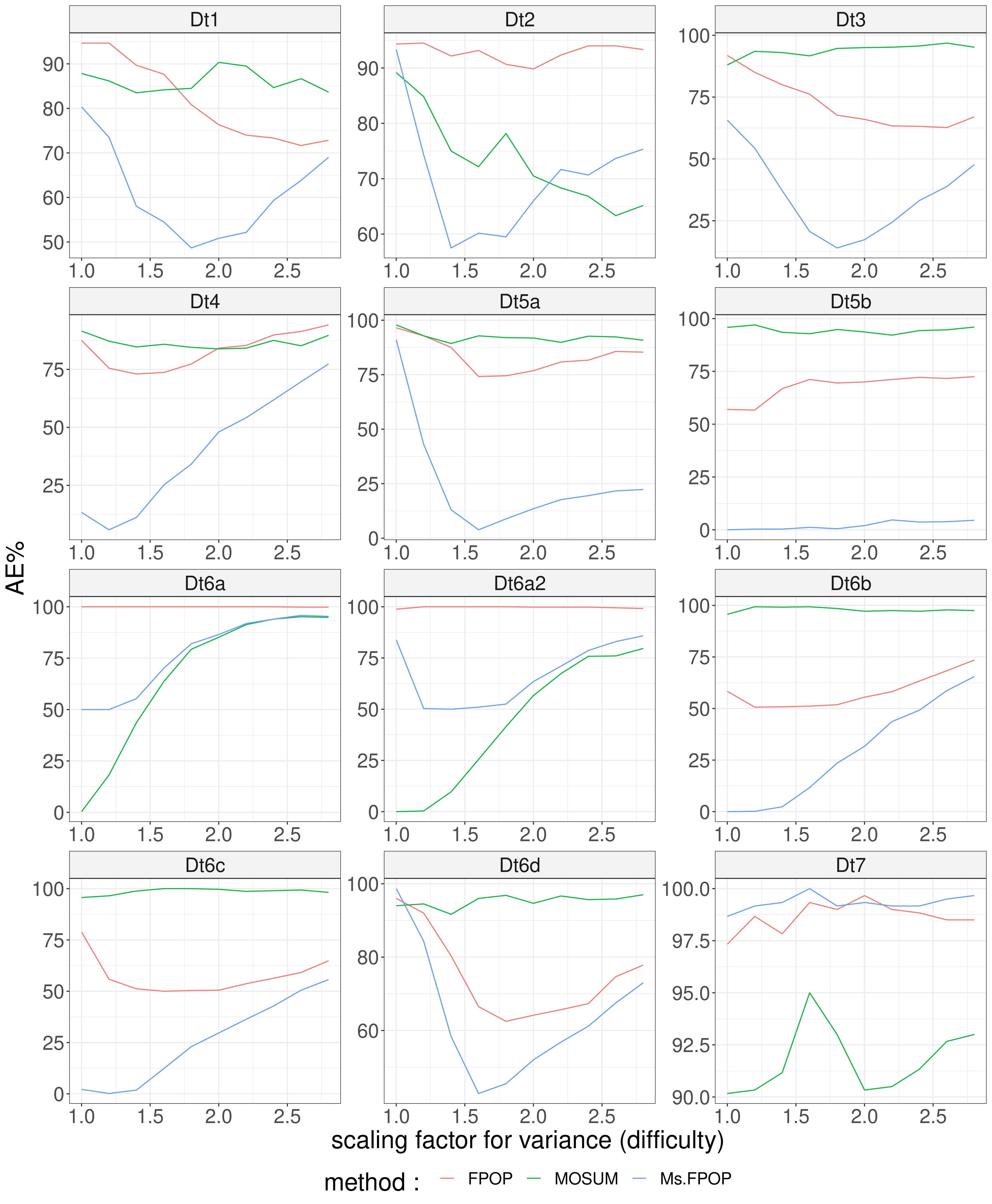}
    \centering\caption{\textbf{AE\% as a function of the scaling factor for the variance (comparison criterion : $\Delta_D$).} The average number of times a method is at least as good as other methods in terms of $\Delta_D$ is computed for \FPOP, \PhiFPOP, and \textbf{MOSUM} on different scenarios of \textit{iid} Gaussian signals and varying $\sigma^2$. (see \textit{Design of Simulations}). Each panel stands for the results on one scenario. Corresponding profiles can be viewed in \ref{supp:other_simu_default}.}
    \label{fig:AE_perc_K}
\end{figure}

\begin{figure}[h!]
    \centering\includegraphics[scale=0.33]{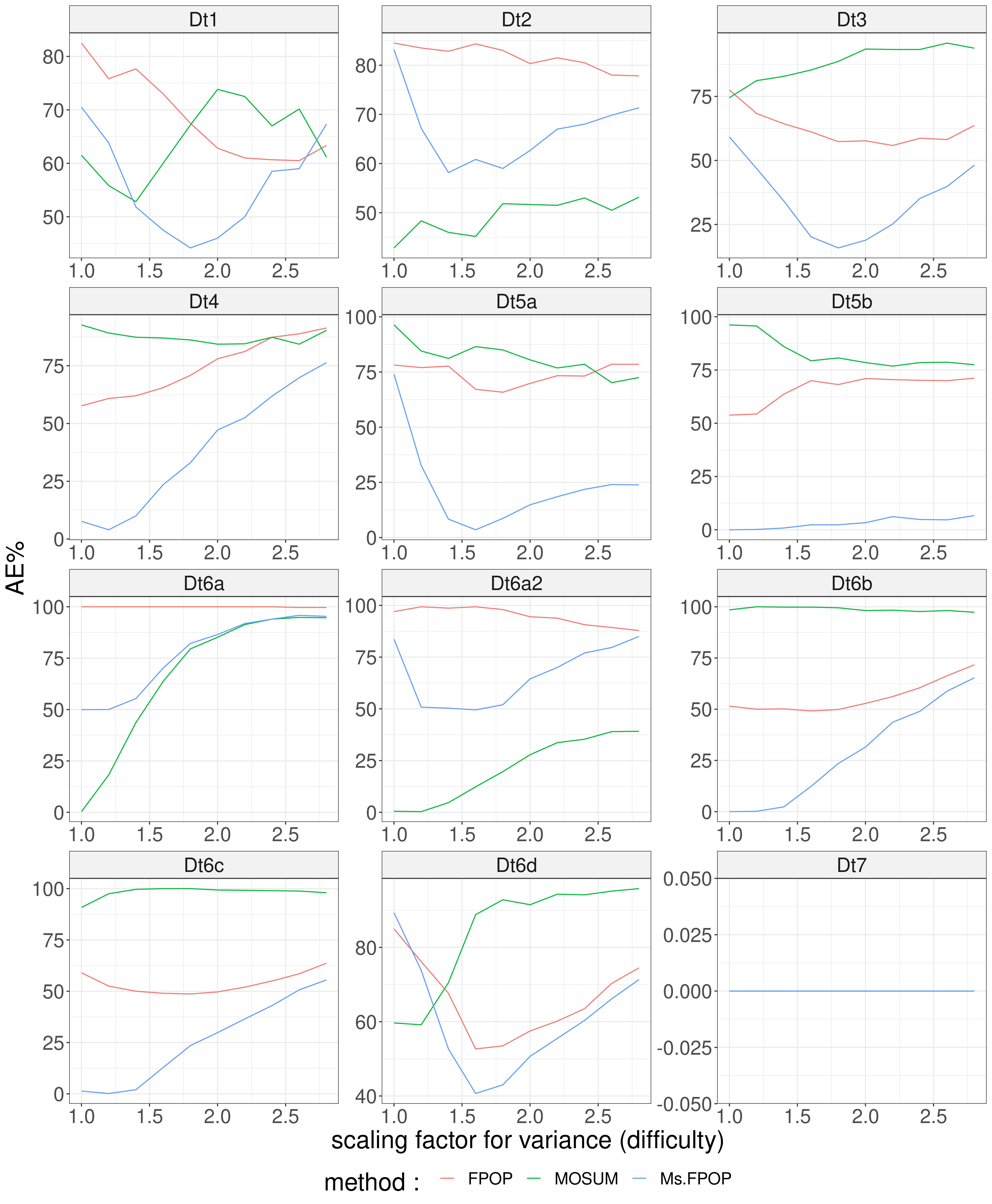}
    \vspace{-0.5cm}\caption{\textbf{AE\% as a function of the scaling factor for the variance (comparison criterion : ARI).} The average number of times a method is at least as good as other methods in terms of ARI is computed for \FPOP, \PhiFPOP, and \textbf{MOSUM} on different scenarios of \textit{iid} Gaussian signals and varying $\sigma^2$. (see \textit{Design of Simulations}). Each panel stands for the results on one scenario. Corresponding profiles can be viewed in \ref{supp:other_simu_default}.}
    \label{fig:AE_perc_ARI}
\end{figure}

\begin{figure}[h!]
\centering\includegraphics[scale=0.33]{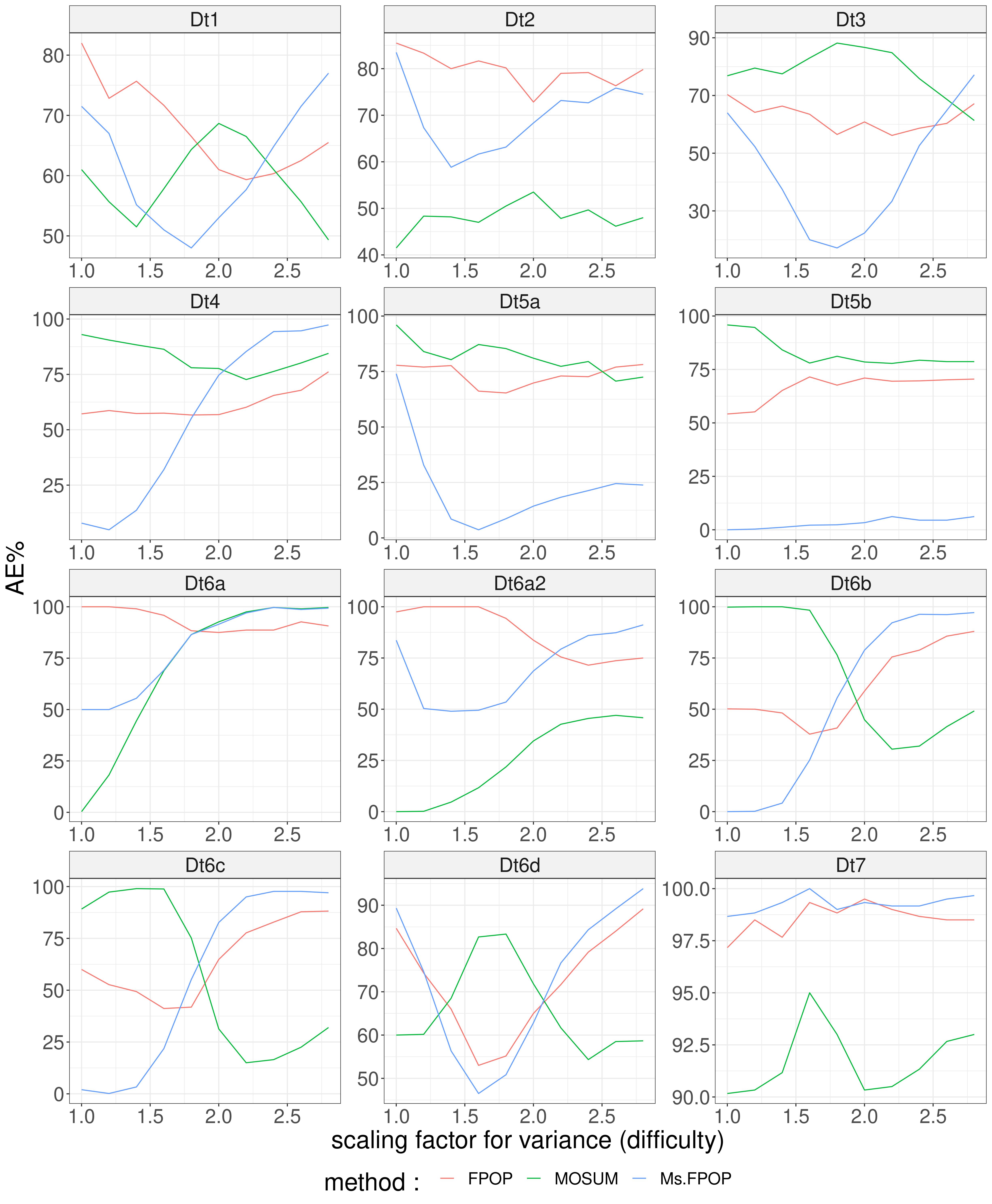}
    \vspace{-0.5cm}\caption{\textbf{AE\% as a function of the scaling factor for the variance (comparison criterion : MSE).} The average number of times a method is at least as good as other methods in terms of MSE is computed for \FPOP, \PhiFPOP, and \textbf{MOSUM} on different scenarios of \textit{iid} Gaussian signals and varying $\sigma^2$. (see \textit{Design of Simulations}). Each panel stands for the results on one scenario. Corresponding profiles can be viewed in \ref{supp:other_simu_default}.}
    \label{fig:AE_perc_MSE}
\end{figure}

\bibliography{references}
\end{document}